\theoremstyle{plain}
\newtheorem{theorem}{Theorem}[section]
\newtheorem{lemma}[theorem]{Lemma}
\newtheorem{proposition}[theorem]{Proposition}
\theoremstyle{definition}
\newtheorem{definition}[theorem]{Definition}
\newtheorem{remark}[theorem]{Remark}
\newtheorem{assumption}[theorem]{Assumption}
\begin{document}

\title{Nash equilibrium for risk-averse investors in a market impact game with transient price impact}

\author{Xiangge Luo\thanks{Seminar for Statistics, ETH Z\"urich, CH-8092 Z\"urich, Switzerland. Email: {\tt luox@student.ethz.ch}}\and \setcounter{footnote}{6} Alexander Schied\thanks{Department of Statistics and Actuarial Science, University of Waterloo,   Waterloo, ON N2L3G1, Canada. Email: {\tt aschied@uwaterloo.ca}\hfill\break
The authors gratefully acknowledge support from the
 Natural Sciences and Engineering Research Council of Canada through grants RGPIN-2017-04054 and USRA-515434-2017.}}
\date{\normalsize First version: July 10, 2018\\
This version: May 19, 2019}

\maketitle

\begin{abstract}
We consider a market impact game for $n$ risk-averse agents that are competing in a market model with linear transient price impact and additional transaction costs. For both finite and infinite time horizons, the agents  aim to minimize a mean-variance functional of their costs or to maximize the expected exponential utility of their revenues. We give explicit representations for corresponding Nash equilibria and prove uniqueness in the case of mean-variance optimization. A qualitative analysis of these Nash equilibria is conducted by means of numerical analysis. 
\end{abstract}

\noindent{\it Keywords:}
Market impact game, high-frequency trading, Nash equilibrium, transient price impact, market impact, predatory trading

\section{Introduction}
\label{S:1}

In a market impact game, financial agents compete with each other in a market framework where each trade creates price impact. 
Early papers on this subject, such as  Brunnermeier and Pedersen~\cite{BrunnermeierPedersen}, Carlin et al.~\cite{Carlinetal}, and Sch\"oneborn and Schied~\cite{SchoenebornSchied}, consider risk-neutral agents that are active in a linear Almgren--Chriss market impact model. Already in this relatively simple setup, interesting effects appear, such as transitions in the predatory or cooperative behavior of agents. Extensions to risk-averse agents in the Almgren--Chriss framework were given, e.g., in~\cite{CarmonaYang,SchiedZhangCARA}. For further developments in the literature on market impact games, we refer to \cite{Carmonabook,Moallemietal,CasgrainJaimungal}.

In this paper, we consider risk-averse agents that are active in a discrete-time model with linear transient price impact. For single-agent optimization problems, such price-impact models were introduced by Obizhaeva and Wang~\cite{ow} and later further developed, e.g.,  in~\cite{ASS,Gatheral}. A market impact game with two risk-neutral agents was first considered by Sch\"oneborn~\cite{Schoeneborn}, who observed that equilibrium strategies may exhibit strong oscillations between buy and sell trades if trading speed is sufficiently high. This situation was further investigated in~\cite{SchiedStrehleZhang,SchiedZhangHotPotato}, where the model was also enhanced by introducing additional quadratic transaction costs, whose strength is parameterized by a number $\theta\ge0$.  It was shown in particular that  there exists an explicitly given critical value $\theta^*> 0$ such that the equilibrium strategies show at least some oscillations for $\theta<\theta^*$, whereas all oscillations disappear for $\theta\ge\theta^*$.

In~\cite{SchiedZhangHotPotato}, only two competing risk-neutral agents are considered. The main goal of the present paper is to extend the results and observations from~\cite{SchiedZhangHotPotato} to a more flexible setting, in which an arbitrary (but finite) number of agents optimize their strategies under risk aversion. More precisely, the agents either minimize a mean-variance functional of the trading costs over deterministic strategies or they maximize the expected CARA utility of their revenues over adaptive strategies. We show that both problems admit an identical Nash equilibrium, which is given in explicit form and which is unique in the case of mean-variance optimization. More precisely, the equilibrium strategies arise as linear combinations of two extreme base strategies $\bm v$ and $\bm w$. The first, $\bm v$, is the normalized common strategy of all players if each player has the same initial position.  The second, $\bm w$,  is the normalized common strategy of all players if the initial positions of all players add up to zero.

Then we use numerical analysis of the equilibrium strategies to determine numerically the critical threshold for the transaction costs above which all oscillations cease. In contrast to the risk-neutral two-player case studied in \cite{SchiedZhangHotPotato}, we now observe two different thresholds for $\bm v$ and $\bm w$. Moreover, the threshold for $\bm v$ will not depend on the risk-aversion but on the number of players. By contrast, the threshold for $\bm w$ will not depend on the number of players but on the risk aversion.

If agents exhibit strictly positive risk aversion, it is possible to study the market impact game with an infinite time horizon. This question is interesting when one does not want to impose an externally given time horizon and instead aims at an intrinsic derivation of a trading horizon. We show that such an infinite-horizon market impact game admits a Nash equilibrium in case $\theta$ is equal to the critical value $\theta^*$, which was determined numerically for finite time horizons. If $\theta\neq\theta^*$, a Nash equilibrium may not exist. 

This conjectured nonexistence is a consequence of the idealization of admitting infinitely many trades, an idealization that also in the context of continuous-time models has turned out to be not as innocent as one might initially hope. Specifically, it was shown in \cite{SchiedStrehleZhang} that in the case of two risk-neutral agents, a nontrivial Nash equilibrium can only exist if $\theta=\theta^*$. This negative result has motivated Strehle \cite{Strehle} to include in the cost functional an additional penalization of the derivatives of the continuous-time strategies. This additional term regularizes the admissible strategies so that Nash equilibria exist in general.

This paper is organized as follows. In Section~\ref{finite horizon section}, we present the setup and state all results for a finite time horizon. Section~\ref{sec:inf}
 contains our discussion of the market impact game with infinite time horizon. All proofs are given in Section~\ref{proofs section}. 

\section{Main results}
\label{S:2}

\subsection{Finite time horizon}\label{finite horizon section}

We consider an $n$-agent extension of the discrete-time market impact model with linear transient price impact that was studied, e.g., in~\cite{AFS1,ASS,ow,SchiedZhangHotPotato,Schoeneborn}. This model is sometimes also called the discrete-time  linear propagator model, and we refer to~\cite{GatheralSchiedSurvey} for a discussion and further background. 

Suppose that $n$ financial agents are active in a market impact model for one risky asset. As commonly assumed in the market impact literature, the unaffected price process $S^0=(S^0_t)_{t\ge0}$ will be a square-integrable and right-continuous martingale on the filtered probability space $(\Omega,\mathscr{F},(\mathscr{F}_t)_{t\ge0},\mathbb P)$. An important  special case will be the Bachelier model of the form
\begin{equation}\label{Bachelier model}
S^0_t=S_0+\sigma B_t,\qquad t\ge0,
\end{equation}
for constants $S_0,\sigma>0$ and a standard Brownian motion $B$.
All agents trade at a finite number of times $0\le t_0<t_1<\cdots<t_N$. The trading strategy of agent $i$ will be a vector $\bm\xi_i=(\xi_{i,0},\dots,\xi_{i,N})^\top$ where $\xi_{i,k}$ represents the number of shares sold at time $t_k$. That is,  $\xi_{i,k} > 0$ represents a  sell order and $\xi_{i,k} < 0$ means a buy order. The matrix of all strategies is denoted by $\Xi = [\bm\xi_1, \dots, \bm\xi_n]$.

 When all the agents apply their strategies, the asset price is given by
\begin{equation}
\label{price}
S^{\Xi}_t = S^0_t - \sum_{t_k<t} \Big[G(t-t_k) \sum_{i=1}^n \xi_{i,k} \Big],
\end{equation}
where $G: \mathbb{R}_+ \rightarrow \mathbb{R}_+$ is called the \emph{decay kernel}.  The quantity $G(t-t_k)$ describes the time-$t$ price impact of a unit transaction made at time $t_k\le t$. When  agent $j$ first places an order $\xi_{j,k} >0$ at time $t_k$, the asset price is moved linearly from $S^{\Xi}_{t_k}$ to $S^{\Xi}_{t_k+} := S^{\Xi}_{t_k} - G(0) \xi_{j,k}$. The liquidation cost for agent $j$ is thus:
\begin{equation*}
-\frac{1}{2}(S^{\Xi}_{t_k+} + S^{\Xi}_{t_k})\xi_{j,k} = \frac{G(0)}{2}\xi_{j,k}^2 - S^{\Xi}_{t_k}\xi_{j,k}.
\end{equation*}
Suppose that immediately after agent $j$, another agent $i$ places an order $\xi_{i,k}>0$. The liquidation cost for agent $i$ is the following:
\begin{equation}\label{single price impact costs}
-\frac{1}{2}(S^{\Xi}_{t_k+} + S^{\Xi}_{t_k+} - G(0) \xi_{i,k})\xi_{i,k} = \frac{G(0)}{2}\xi_{i,k}^2 - S^{\Xi}_{t_k}\xi_{i,k} + G(0)\xi_{j,k}\xi_{i,k},
\end{equation}
where $G(0)\xi_{j,k}\xi_{i,k}$ is an additional cost term due to the latency in execution time. On average, fifty percent of times, the order of agent $j$ will be executed before the order of agent $i$. The latency costs for agent $i$ at time $t_k$ will thus be of the form
$$\frac12 G(0)\sum_{j\neq i}\xi_{i,k}\xi_{j,k}.
$$
In addition to the execution costs described above, we follow~\cite{SchiedStrehleZhang,SchiedZhangHotPotato} in assuming quadratic transaction costs $\theta\xi_{i,k}^2$ with $\theta \geq 0$. One of our goals will be to analyze the qualitative effects these transaction costs will have on optimal strategies. Such quadratic transaction costs are often used to model   \lq\lq slippage" arising from various costs incurred by a transaction (see~\cite{AlmgrenChriss2,BertsimasLo} and~\cite[Section 2.2]{Gatheral}) or a transaction tax (see \cite{SchiedStrehleZhang,SchiedZhangHotPotato}). As discussed by Strehle \cite[p. 5]{Strehle}, these transaction costs should not be understood as resulting from temporary price impact, as all costs arising from price impact are already contained in \eqref{single price impact costs}. Moreover, one can argue as in Proposition 2.6 of~\cite{SchiedZhangHotPotato} to see that  our quadratic transaction cost function can be replaced by proportional transaction costs in a neighborhood of the origin without affecting the Nash equilibrium we are going to derive. Since the main difference of quadratic and proportional transaction costs is their behavior at the origin, it is therefore highly plausible that similar results as obtained in the following sections for quadratic transaction costs might also hold for proportional transaction costs. The previous discussion thus motivates the following definition.

\begin{definition}
\label{def:cost}
Given  a time grid $\mathbb{T} = \{t_0, t_1, \dots, t_N\}$, the \emph{execution costs} of the strategy $\bm{\xi}_i $ given all other strategies $\bm{\xi}_j  $ with $j \neq i$ are defined as
\begin{equation}\label{cost functional}
 \mathscr{C}_{\mathbb{T}}(\bm{\xi}_i  | \bm{\xi}_{-i} )= 
  \sum^N_{k=0}\Big[\frac{G(0)}{2}\xi_{i,k}^2 - S^{\Xi}_{t_k}\xi_{i,k}  + \frac{G(0)}2\sum_{j\neq i}\xi_{i,k}\xi_{j,k}  + \theta\xi_{i,k}^2\Big],
\end{equation}
where
$\bm{\xi}_{-i}=[\bm{\xi}_{1}, \dots, \bm{\xi}_{i-1}, \bm{\xi}_{i+1}, \dots, \bm{\xi}_{n}]$.
\end{definition}

In the sequel, we will suppose that agent $i$ has an initial position of $X_i\in\mathbb R$ shares   and is constrained to hold a zero terminal position by the end of the trading day. It is often assumed~\cite{ow,SchiedZhangHotPotato} that  agents aim  to minimize the expected costs  
over the following class 
 of strategies,
$$\mathscr{X}(X_i,\mathbb T)=\Big\{\bm\xi=(\xi_0,\dots\xi_N)\,\Big|\,\xi_i\text{ is $\mathscr{F}_{t_i}$-measurable and bounded and }\sum_{i=0}^N\xi_i=X_i\Big\}.
$$
 In practice, however, it is also popular to incorporate the agents' risk aversion  and to  optimize the following mean-variance functional of the trading costs,
\begin{equation}
\label{eq:mvf}
\mathrm{MV}_{\gamma}(\bm{\xi}_i  | \bm{\xi}_{-i}) = \mathbb{E}[\mathscr{C}_{\mathbb{T}}(\bm{\xi}_i  | \bm{\xi}_{-i})] + \frac{\gamma}{2}\mathrm{Var}  [\mathscr{C}_{\mathbb{T}}(\bm{\xi}_i  | \bm{\xi}_{-i})].
\end{equation}
Here, $\gamma\ge0$ is a risk-aversion parameter. For $\gamma>0$, the mean-variance functional \eqref{eq:mvf} is typically only time-consistent if   strategies are deterministic; see, e.g.,~\cite{AlmgrenChriss2,LorenzAlmgren}.   Therefore, its minimization is usually restricted to the class of deterministic strategies in $\mathscr{X}(X_i,\mathbb T)$, which we denote by
$$\mathscr{X}_{\mathrm{det}}(X_i,\mathbb T)=\Big\{\bm\xi\in \mathscr{X}(X_i,\mathbb T)\,\Big|\,\text{$\bm\xi$ is deterministic}\Big\}=\Big\{\bm\xi\in\mathbb R^{|\mathbb T|}\,\Big|\,\bm1^\top\bm\xi=X_i\Big\},
$$
for $\bm 1=(1,\dots,1)^\top \in \mathbb{R}^{N+1}$. It can also make sense to maximize the expected utility of the revenues, which are nothing else than the negative costs. Here, we will use the following utility functional,
$$U_\gamma(\bm\xi_i|\bm\xi_{-i}):=\mathbb E[u_\gamma(-\mathscr{C}_{\mathbb{T}}(\bm{\xi}_i  | \bm{\xi}_{-i}))],
$$
where $u_\gamma(x)$ is the following exponential -- or CARA -- utility function,
$$u_\gamma(x)=\begin{cases}\frac1\gamma(1-e^{-\gamma x})&\text{if $\gamma>0$,}\\
-x&\text{if $\gamma=0$.}
\end{cases}
$$
Due to the time consistency of the expected utility functional, we can consider its maximization over all adapted strategies from the class $\mathscr{X}(X_i,\mathbb T)$.
Moreover, as, e.g.,  in~\cite{Carlinetal,SchiedStrehleZhang,SchiedZhangCARA,SchiedZhangHotPotato,Strehle}, we assume henceforth that each agent has full information about the strategies used by the other agents.  

\begin{definition}
\label{def:nash}
Suppose there are $n$ agents with initial inventories $X_1, \dots, X_n \in \mathbb{R}$ and risk aversion parameter $\gamma \geq 0$ and that $\mathbb{T} := \{t_0, t_1, \dots, t_N\}$ is a fixed time grid.
\begin{enumerate}[{\rm(a)}]

 \item A \emph{Nash equilibrium for mean-variance optimization} is  a collection of strategies $(\bm{\xi}_1^*, \dots, \bm{\xi}_n^*) \in \mathscr{X}_{\mathrm{det}} (X_1, \mathbb{T}) \times \dots \times \mathscr{X}_{\mathrm{det}} (X_n, \mathbb{T})$ such that each $\bm\xi_i^*$ minimizes the mean-variance functional  $\mathrm{MV}_{\gamma}(\bm{\xi} | \bm{\xi}_{-i}^*) $ over $\bm\xi\in \mathscr{X}_{\mathrm{det}} (X_i, \mathbb{T})$.
 
  \item A \emph{Nash equilibrium for CARA utility maximization} is  a collection of strategies $(\bm{\xi}_1^*, \dots, \bm{\xi}_n^*) \in \mathscr{X} (X_1, \mathbb{T}) \times \dots \times \mathscr{X}  (X_n, \mathbb{T})$ such that  each $\bm\xi_i^*$ maximizes the CARA utility functional  $U_{\gamma}(\bm{\xi} | \bm{\xi}_{-i}^*) $ over $\bm\xi\in \mathscr{X}(X_i, \mathbb{T})$.

 \end{enumerate}

\end{definition}


In the preceding definition, we have assumed that all agents share the same risk aversion parameter $\gamma\ge0$. The case in which the agents have different risk aversion parameters  is a straightforward but tedious extension of the current model. It will substantially complicate the notation while not providing  significant additional insights. For this reason, we will only consider the case of identical risk aversion parameters.   Now  let $$\varphi(t):=\mathrm{Var}  (S^0_t)\qquad\text{for $t \geq 0$.}
$$
 We define for $\theta,\gamma\ge0$,
\begin{align}
\label{Gamma}
\Gamma_{ij}^{\gamma,\theta} &= G(|t_i - t_j|)+ \gamma \varphi(t_i \wedge t_j)+2\theta\delta_{ij}, \qquad i,j=0,1,\dots,N,
\end{align}
where $\delta_{ij}$ is the Kronecker delta. Then we define
\begin{equation}
\label{Gamma_tilde}
\widetilde{\Gamma}_{ij} = 
\left\{
\begin{array}{ll}
      0 & \text{if } i < j, \\
      \frac12\Gamma_{ii}^{0,0} & \text{if } i = j, \\
     \Gamma_{ij}^{0,0}& \text{if } i > j. \\
\end{array} 
\right.
\end{equation}
Note that $\Gamma^{0,0} = \widetilde{\Gamma} + \widetilde{\Gamma}^{\top}$. We further define
\begin{align}
\label{eq:v}
\bm{v} &= \frac{1}{\bm{1}^{\top}[\Gamma^{\gamma,\theta} + (n-1) \widetilde{\Gamma}]^{-1}\bm{1}} [\Gamma^{\gamma,\theta} + (n-1) \widetilde{\Gamma}]^{-1}\bm{1},
\\
\label{eq:w}
\bm{w} &= \frac{1}{\bm{1}^{\top}[\Gamma^{\gamma,\theta} - \widetilde{\Gamma}]^{-1}\bm{1}} [\Gamma^{\gamma,\theta} - \widetilde{\Gamma}]^{-1}\bm{1}.
\end{align}
Recall that a function $g:\mathbb R\to\mathbb R$ is called strictly positive definite (in the sense of Bochner) if for all $n\in\mathbb N$ and $s_1,\dots, s_n\in\mathbb R$, the matrix $(g(s_i-s_j))_{i,j=1,\dots, n}$ is positive definite.  

\begin{assumption}\label{pos def assu}We henceforth assume that the function $\mathbb R\ni x\mapsto G(|x|)$ is strictly positive definite. 
\end{assumption}

According to P\'olya \cite{Polya49}, Assumption~\ref{pos def assu} is satisfied as soon as $G$ is convex, nonincreasing, and nonconstant (see also Young \cite{Young} for an earlier argument). It implies that the matrix $\Gamma^{0,0}$ is positive definite for all time grids $\mathbb T$. 
 As observed in~\cite{ASS}, Assumption~\ref{pos def assu}  also rules out the existence of price manipulation strategies in the sense of Huberman and Stanzl~\cite{HubermanStanzl}. 
Now we can state our first result on the existence and uniqueness of a Nash equilibrium. It extends Theorem 2.5 from~\cite{SchiedZhangHotPotato}, where the case $n=2$ and $\gamma=0$ was treated. 


\begin{theorem}\label{maintheorem} Suppose Assumption~\ref{pos def assu} holds. Then, for any time grid $\mathbb{T}$, parameters $\theta,\gamma\ge0$, initial inventories $X_1, \dots, X_n \in \mathbb{R}$, and $\bar{X} = \frac{1}{n}\sum_{j=1}^n X_j$, the strategies 
\begin{equation}
\label{opt_strat}
\bm{\xi^*_i} = \bar{X} \bm{v}+ (X_i - \bar{X}) \bm{w},\qquad i=1,\dots, n.
\end{equation}
form the unique Nash equilibrium for  mean-variance optimization. If, moreover, $S^0$ is a Bachelier model of the form \eqref{Bachelier model},  then the strategies \eqref{opt_strat}  also form a Nash equilibrium for CARA utility maximization. 
\end{theorem}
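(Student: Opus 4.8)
The plan is to recast mean-variance optimization as a strictly convex quadratic game, reduce Nash equilibria to a coupled linear system, verify the ansatz \eqref{opt_strat} by substitution, and then deduce the CARA statement via a change-of-measure argument together with a pathwise convexity estimate. I would first rewrite the cost \eqref{cost functional}: inserting \eqref{price} and splitting $S^{\Xi}_{t_k}$ into the unaffected part $S^0_{t_k}$ and the accumulated price impacts of agent $i$'s own and of the other agents' past trades, and using $\widetilde\Gamma+\widetilde\Gamma^\top=\Gamma^{0,0}$ together with \eqref{Gamma_tilde}, one finds $\mathscr C_{\mathbb T}(\bm\xi_i\,|\,\bm\xi_{-i})=\tfrac12\bm\xi_i^\top\Gamma^{0,\theta}\bm\xi_i+\sum_{j\neq i}\bm\xi_i^\top\widetilde\Gamma\bm\xi_j-\sum_{k=0}^N S^0_{t_k}\xi_{i,k}$, where $\Gamma^{0,\theta}_{k\ell}=G(|t_k-t_\ell|)+2\theta\delta_{k\ell}$ (the matrix \eqref{Gamma} at $\gamma=0$). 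For deterministic strategies the martingale property yields $\mathbb E[S^0_{t_k}]=\mathbb E[S^0_0]$ for all $k$ and $\mathrm{Cov}(S^0_{t_k},S^0_{t_\ell})=\varphi(t_k\wedge t_\ell)$, so $\sum_k\mathbb E[S^0_{t_k}]\xi_{i,k}=\mathbb E[S^0_0]X_i$ is constant on $\mathscr X_{\mathrm{det}}(X_i,\mathbb T)$ while $\mathrm{Var}(\mathscr C_{\mathbb T})=\sum_{k,\ell}\xi_{i,k}\xi_{i,\ell}\varphi(t_k\wedge t_\ell)$; since $\Gamma^{\gamma,\theta}_{k\ell}=\Gamma^{0,\theta}_{k\ell}+\gamma\varphi(t_k\wedge t_\ell)$, this shows that on $\mathscr X_{\mathrm{det}}(X_i,\mathbb T)$ one has $\mathrm{MV}_{\gamma}(\bm\xi_i\,|\,\bm\xi_{-i})=\tfrac12\bm\xi_i^\top\Gamma^{\gamma,\theta}\bm\xi_i+\sum_{j\neq i}\bm\xi_i^\top\widetilde\Gamma\bm\xi_j$ up to an additive constant.

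Next I would record the positive-definiteness facts: by Assumption~\ref{pos def assu} the matrix $\Gamma^{0,0}$ is positive definite, and since $(\varphi(t_k\wedge t_\ell))_{k,\ell}$ is a covariance matrix, $\Gamma^{\gamma,\theta}$ is positive definite; because the symmetric part of $\widetilde\Gamma$ equals $\tfrac12\Gamma^{0,0}$, the matrices $\Gamma^{\gamma,\theta}+(n-1)\widetilde\Gamma$ and $\Gamma^{\gamma,\theta}-\widetilde\Gamma$ have positive definite symmetric parts, hence are invertible, and the same remark gives $\bm 1^\top M^{-1}\bm 1>0$ for any $M$ with positive definite symmetric part (if $\bm y=M^{-1}\bm 1\neq 0$ then $\bm 1^\top M^{-1}\bm 1=\bm y^\top M\bm y>0$). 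Thus $\bm v,\bm w$ in \eqref{eq:v}--\eqref{eq:w} are well defined, satisfy $\bm 1^\top\bm v=\bm 1^\top\bm w=1$, and make $[\Gamma^{\gamma,\theta}+(n-1)\widetilde\Gamma]\bm v$ and $[\Gamma^{\gamma,\theta}-\widetilde\Gamma]\bm w$ scalar multiples of $\bm 1$. Since $\bm\xi_i\mapsto\mathrm{MV}_{\gamma}(\bm\xi_i\,|\,\bm\xi_{-i})$ is a strictly convex quadratic on the affine set $\{\bm\xi:\bm 1^\top\bm\xi=X_i\}$, a profile $(\bm\xi_1^*,\dots,\bm\xi_n^*)$ with $\bm 1^\top\bm\xi_i^*=X_i$ is a Nash equilibrium for mean-variance optimization if and only if there are $\lambda_1,\dots,\lambda_n\in\mathbb R$ with $\Gamma^{\gamma,\theta}\bm\xi_i^*+\widetilde\Gamma\sum_{j\neq i}\bm\xi_j^*=\lambda_i\bm 1$ for every $i$. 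Substituting \eqref{opt_strat}, the $\bm w$-terms cancel because $\sum_j(X_j-\bar X)=0$, so $\sum_{j=1}^n\bm\xi_j^*=n\bar X\bm v$ and $\sum_{j\neq i}\bm\xi_j^*=(n-1)\bar X\bm v-(X_i-\bar X)\bm w$; hence $\Gamma^{\gamma,\theta}\bm\xi_i^*+\widetilde\Gamma\sum_{j\neq i}\bm\xi_j^*=\bar X[\Gamma^{\gamma,\theta}+(n-1)\widetilde\Gamma]\bm v+(X_i-\bar X)[\Gamma^{\gamma,\theta}-\widetilde\Gamma]\bm w$ is a scalar multiple of $\bm 1$, which proves existence. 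For uniqueness, summing the Lagrange equations over $i$ and using invertibility and $\bm 1^\top\sum_i\bm\xi_i^*=n\bar X$ forces $\sum_i\bm\xi_i^*=n\bar X\bm v$; the difference between the $i$-th Lagrange equation and its average over $i$ then reads $[\Gamma^{\gamma,\theta}-\widetilde\Gamma](\bm\xi_i^*-\bar X\bm v)=(\lambda_i-\bar\lambda)\bm 1$ with $\bm 1^\top(\bm\xi_i^*-\bar X\bm v)=X_i-\bar X$, whence $\bm\xi_i^*-\bar X\bm v=(X_i-\bar X)\bm w$.

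For the CARA statement it suffices that, when the others use the deterministic profile $\bm\xi_{-i}^*$, the deterministic strategy $\bm\xi_i^*$ of \eqref{opt_strat} maximizes $U_{\gamma}(\,\cdot\,|\,\bm\xi_{-i}^*)$ over all adapted $\bm\xi_i\in\mathscr X(X_i,\mathbb T)$; for $\gamma>0$ this is the same as minimizing $\mathbb E[\exp(\gamma\mathscr C_{\mathbb T}(\bm\xi_i\,|\,\bm\xi_{-i}^*))]$. Using $S^0_t=S_0+\sigma B_t$ and summation by parts, $\sum_k S^0_{t_k}\xi_{i,k}=S_0 X_i+\sigma\sum_{k=0}^N(B_{t_k}-B_{t_{k-1}})x_{i,k}$ (with $B_{t_{-1}}:=0$ and $\mathscr F_{t_{-1}}$ trivial), where $x_{i,k}:=\sum_{\ell=k}^N\xi_{i,\ell}$ is $\mathscr F_{t_{k-1}}$-measurable, $x_{i,0}=X_i$, and the final trade $\xi_{i,N}=x_{i,N}$ is forced by the liquidation constraint; hence $\mathscr C_{\mathbb T}(\bm\xi_i\,|\,\bm\xi_{-i}^*)=Q(\bm\xi_i)-\sigma\sum_k(B_{t_k}-B_{t_{k-1}})x_{i,k}$ with $Q(\bm\xi_i):=\tfrac12\bm\xi_i^\top\Gamma^{0,\theta}\bm\xi_i+\sum_{j\neq i}\bm\xi_i^\top\widetilde\Gamma\bm\xi_j^*-S_0 X_i$ a convex quadratic with deterministic coefficients. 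Setting $N_k:=\exp(-\gamma\sigma(B_{t_k}-B_{t_{k-1}})x_{i,k}-\tfrac12\gamma^2\sigma^2(t_k-t_{k-1})x_{i,k}^2)$, the conditional Gaussian moment generating function gives $\mathbb E[N_k\,|\,\mathscr F_{t_{k-1}}]=1$, so $\mathbb E[\prod_{k=0}^N N_k]=1$ (boundedness of admissible strategies makes all these expectations finite), and one checks pathwise that $\exp(\gamma\mathscr C_{\mathbb T}(\bm\xi_i\,|\,\bm\xi_{-i}^*))=\exp(\gamma F(\bm\xi_i))\prod_{k=0}^N N_k$, where $F(\bm\xi_i):=Q(\bm\xi_i)+\tfrac\gamma2\sigma^2\sum_k(t_k-t_{k-1})x_{i,k}^2$. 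By Abel summation, $\sigma^2\sum_k(t_k-t_{k-1})x_{i,k}^2=\sum_{k,\ell}\xi_{i,k}\xi_{i,\ell}\varphi(t_k\wedge t_\ell)$ on $\{\bm\xi:\bm 1^\top\bm\xi=X_i\}$, so on that set $F$ equals $\tfrac12\bm\xi_i^\top\Gamma^{\gamma,\theta}\bm\xi_i+\sum_{j\neq i}\bm\xi_i^\top\widetilde\Gamma\bm\xi_j^*$ plus a constant, i.e.\ it coincides with $\mathrm{MV}_{\gamma}(\bm\xi_i\,|\,\bm\xi_{-i}^*)$ up to a constant, and is therefore a strictly convex quadratic on that affine set with unique minimizer $\bm\xi_i^*$ by the previous paragraph. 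Since $\bm\xi_i(\omega)$ lies in this affine set for a.e.\ $\omega$ and $\prod_k N_k>0$, we obtain $\exp(\gamma\mathscr C_{\mathbb T}(\bm\xi_i\,|\,\bm\xi_{-i}^*))\geq\exp(\gamma F(\bm\xi_i^*))\prod_k N_k$ a.s., and taking expectations, $\mathbb E[\exp(\gamma\mathscr C_{\mathbb T}(\bm\xi_i\,|\,\bm\xi_{-i}^*))]\geq\exp(\gamma F(\bm\xi_i^*))=\mathbb E[\exp(\gamma\mathscr C_{\mathbb T}(\bm\xi_i^*\,|\,\bm\xi_{-i}^*))]$, the last equality because $\bm\xi_i^*$ is deterministic; so $\bm\xi_i^*$ is the best adapted response for every $i$. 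The case $\gamma=0$ is degenerate: then $U_0(\bm\xi_i\,|\,\bm\xi_{-i}^*)=-\mathbb E[\mathscr C_{\mathbb T}(\bm\xi_i\,|\,\bm\xi_{-i}^*)]=-\mathbb E[Q(\bm\xi_i)]\leq-Q(\bm\xi_i^*)=U_0(\bm\xi_i^*\,|\,\bm\xi_{-i}^*)$ by convexity of $Q$ and the fact that $\bm\xi_i^*$ minimizes $Q$ on the constraint set.

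The linear-algebra steps (the matrix reduction of the cost and the back-substitution in the Lagrange system) are routine. The step I expect to require the most care is the CARA argument: one must spot the factorization $e^{\gamma\mathscr C}=e^{\gamma F}\prod_k N_k$ into a pathwise strictly convex functional $F$ (which, by the Abel-summation identity, is exactly the mean-variance functional) times a mean-one product of conditionally Gaussian martingale increments, after which the pathwise convexity bound for $F$ over the liquidation-constraint set collapses the optimal adapted response onto the deterministic mean-variance optimizer. One should also be careful with the positive-definiteness bookkeeping, since $\widetilde\Gamma$ is not symmetric and only its symmetric part $\tfrac12\Gamma^{0,0}$ enters these arguments.
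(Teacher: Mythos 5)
Your proof is correct. For the mean--variance part you follow essentially the same route as the paper: the reduction of $\mathrm{MV}_\gamma$ to the quadratic form $\frac12\bm\xi_i^\top\Gamma^{\gamma,\theta}\bm\xi_i+\bm\xi_i^\top\widetilde\Gamma\sum_{j\neq i}\bm\xi_j$ (the paper's Lemma~\ref{lemma:mvf}), positivity of the relevant matrices via the covariance matrix $(\varphi(t_k\wedge t_\ell))_{k,\ell}$ and the fact that the symmetric part of $\widetilde\Gamma$ is $\frac12\Gamma^{0,0}$ (Lemma~\ref{lemma:invertible}), the Lagrange system \eqref{pf:optimalityconditions}, and the sum/difference manipulation that isolates $\bm v$ and $\bm w$. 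Your remark that, by strict convexity on the affine constraint set, the Lagrange conditions are both necessary and sufficient packages existence and uniqueness in one stroke, whereas the paper states uniqueness as a separate lemma with proof omitted by reference; in substance the arguments coincide. The genuine divergence is in the CARA part: the paper reduces the adapted problem to the deterministic one by citing Theorem~2.1 of Schied--Sch\"oneborn--Tehranchi and then Corollary~2.1 of \cite{SchiedZhangCARA}, while you prove the reduction from scratch by writing $e^{\gamma\mathscr C_{\mathbb T}(\bm\xi_i|\bm\xi_{-i}^*)}=e^{\gamma F(\bm\xi_i)}\prod_k N_k$ with $F$ a pathwise convex functional that the Abel-summation identity $\sum_k(t_k-t_{k-1})x_{i,k}^2=\sum_{k,\ell}\xi_{i,k}\xi_{i,\ell}(t_k\wedge t_\ell)$ identifies with the mean--variance objective, and $\prod_k N_k$ a mean-one product of conditionally Gaussian exponentials; the pathwise bound $F(\bm\xi_i(\omega))\ge F(\bm\xi_i^*)$ then survives taking expectations. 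This yields a self-contained proof at the cost of some bookkeeping: note that the $\mathscr F_{t_{k-1}}$-measurability of $x_{i,k}$ rests on rewriting it as $X_i-\sum_{\ell<k}\xi_{i,\ell}$ via the liquidation constraint (which you use correctly), and that the independence of the increments $B_{t_k}-B_{t_{k-1}}$ from $\mathscr F_{t_{k-1}}$ is what makes $\mathbb E[N_k\mid\mathscr F_{t_{k-1}}]=1$. Your argument is essentially the proof underlying the cited external theorem, made explicit.
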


\begin{remark}
Note that the Nash equilibrium for mean-variance optimization is unique, but that we do not know whether the Nash equilibrium for CARA utility maximization is also unique. This has to do with the larger class of adapted strategies that is admitted  for CARA utility maximization. However, it follows easily from the first part of Theorem~\ref{maintheorem} that the strategies \eqref{opt_strat} form a unique Nash equilibrium for CARA utility maximization when $S^0$ is a Bachelier model and all agents are restricted to use deterministic strategies.
\end{remark}

It follows from Theorem~\ref{maintheorem} that in the following two special cases the Nash equilibrium has a particularly simple structure:
\begin{itemize}
\item if $X_1 = \dots = X_n$, then  $\bm{\xi}_i^* = X_1\bm{v}$ for $i=1,\dots, n$; 
\item if $X_1 + \dots + X_n = 0$, then $\bm{\xi}_i^* = X_i\bm{w}$ for $i=1,\dots, n$.
\end{itemize}

It was shown in Corollary 1 of~\cite{ASS} that, for convex and nonincreasing  $G$ and convex $\varphi$,  single-agent strategies ($n=1$) are always buy-only or sell-only. On the other hand, Sch\"oneborn~\cite{Schoeneborn} observed that for $G(t)=e^{-t}$, $n=2$, $\gamma=0$, and $\theta=0$ the equilibrium strategies oscillate between buy and sell orders. These oscillations are thus a genuine effect of the interaction between the two agents. This effect was explained in~\cite{SchiedStrehleZhang,SchiedZhangHotPotato,Schoeneborn} as a result of the need for protection against predatory trading by the competitor. These oscillations also have a similarity to the \lq\lq hot-potato" game between high-frequency traders during the flash crash of May 10, 2010 (see \cite[p.~2]{SEC}). In~\cite{SchiedZhangHotPotato}, the influence of $\theta$ on the oscillations of the equilibrium strategies was analyzed for $n=2$ and $\gamma=0$. It was found that there exists a critical level $\theta^*$ such that the equilibrium strategies show at least some oscillations for $\theta<\theta^*$, whereas all oscillations disappear for $\theta\ge\theta^*$. In~\cite{SchiedZhangHotPotato}, the critical level $\theta^*$ was identified as $\frac14G(0)$. Here, our goal is to analyze numerically the influence of the number $n$ of agents and the level $\gamma$ of their risk aversion on the value of $\theta^*$. 

\begin{assumption}\label{assumption 1} 
In the numerical analysis, we make the following assumptions.
\begin{enumerate}[{\rm(i)}]
\item We have $T=1$ and the time grid is equidistant:
$
\mathbb{T}_N :=  \big\{\frac{k}{N} | k = 0,1,\dots,N \big\}$ for $ N \in \mathbb{N}.$
\item $G$ is of the form $G(t) = e^{- t}$.
\item $S^0$ is a Bachelier model of the form \eqref{opt_strat} with $\sigma=1$.
\end{enumerate}
\end{assumption}

In Figure~\ref{fig:v_gamma_n2} we observe that increasing the risk aversion $\gamma$ does not stop the oscillations in the vector $\bm v$. On the contrary, increasing $\gamma$ actually magnifies the oscillations during the early trading periods.  Increasing the level $\theta$ of transaction costs, however, will clearly diminish the size of oscillations. For fixed $n$, $N$, and $\gamma$, we can therefore look for that level $\theta_{ v}=\theta_{ v}(n,N,\gamma)$ at which $\min_i v_i$ becomes nonnegative. Figure \ref{fig:thetav} suggests that, at least for sufficiently large $N$,  this level is completely independent of the risk aversion parameter $\gamma$, an observation we find highly surprising. 
In Figure~\ref{fig:thetavn2}, we provide numerical surface plots for the function $(N,\gamma)\mapsto\theta_v(n,N,\gamma)$ with $n=2$ and $n=5$. Together with additional simulations carried out by the authors, Figures~\ref{fig:thetavn2} and~\ref{fig:thetav} 
suggest that for each $n$ there is a critical level at which all oscillations of $\bm v$ cease and that it is given by
\begin{equation}\label{conj 1}
\theta^*_v(n)=\sup_{N,\gamma}\theta_v(n,N,\gamma)=\frac{n-1}4.
\end{equation}
This conjecture is consistent with the theoretical results obtained in~\cite{ASS} and~\cite{SchiedZhangHotPotato} for $n=1$ and $n=2$, respectively.
\begin{figure}[H]
\centering
\includegraphics[width=\textwidth]{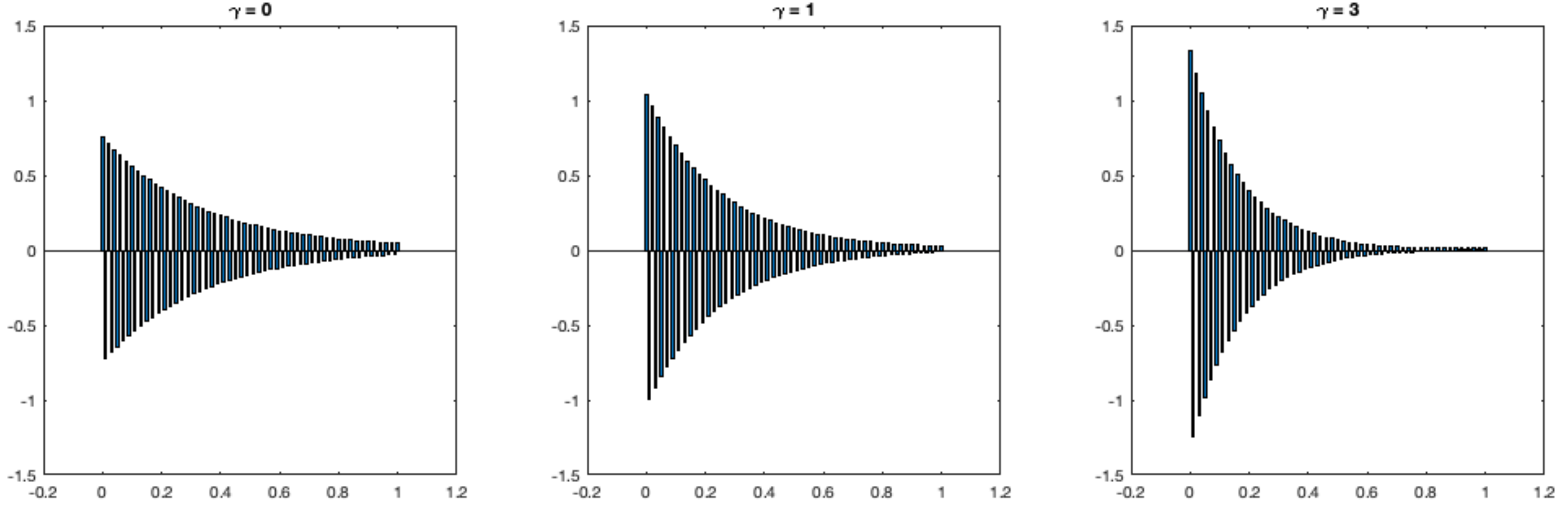}
\caption{Vector $\bm{v}$ with $\gamma = 0$ (left), $\gamma = 1$ (middle) and $\gamma = 3$ (right) for $n = 2$, $N= 100$,  and $\theta = 0$.}
\label{fig:v_gamma_n2}
\end{figure}

\begin{figure}[H]
\centering
\includegraphics[width=\textwidth]{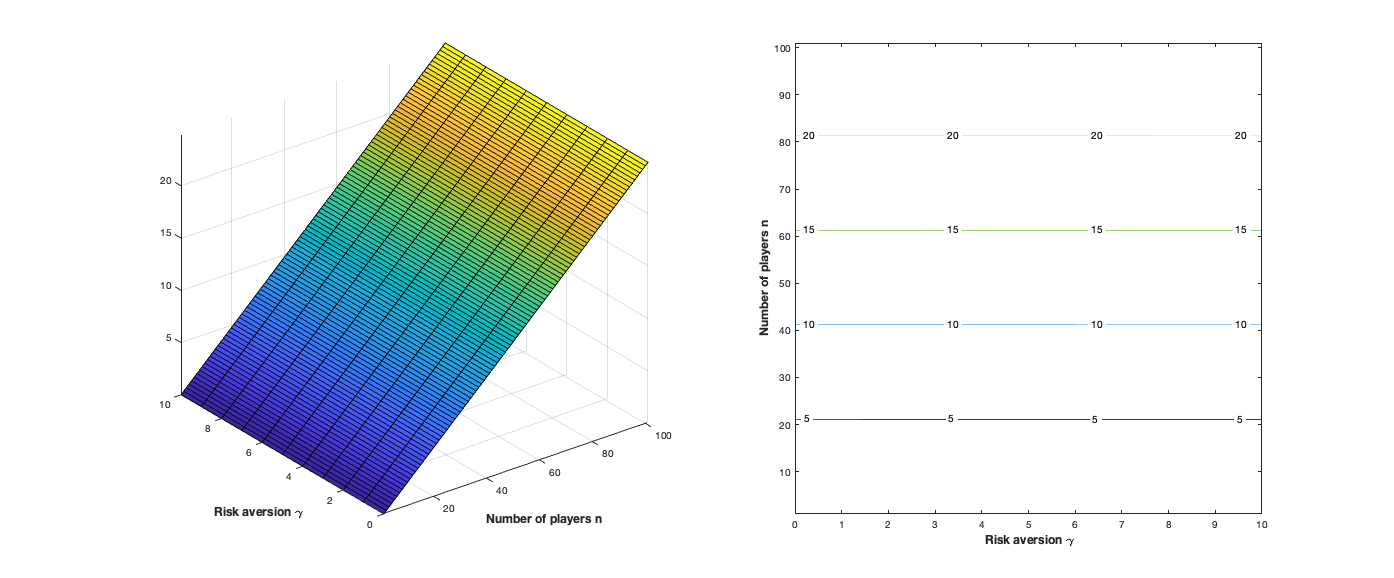}
\caption{Surface plot (left) and level curves (right) of $\theta_v(n,N,\gamma)$ with respect to the number of players $n$ and the risk aversion parameter $\gamma$ for $N= 500$.}
\label{fig:thetav}
\end{figure}

\begin{figure}[H]
\centering
\includegraphics[width=\textwidth]{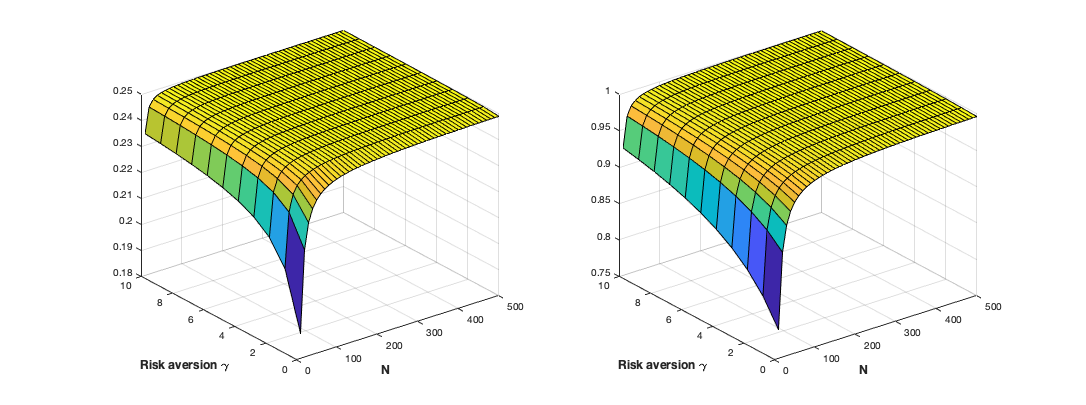}
\caption{Surface plots of $\theta_v(n,N,\gamma)$ with $n = 2$ (left) and $n=5$ (right) with respect to $N$ and the risk aversion parameter $\gamma$.}
\label{fig:thetavn2}
\end{figure}



Now we turn to vector $\bm{w}$. The first observation is that $\bm w$ is independent of the number $n$ of agents. Thus, the critical level $\theta_w$ at which oscillations cease is a function of $N$ and $\gamma$ only. For $\gamma=0$, $\bm w$ must be identical to the one studied in~\cite{SchiedStrehleZhang,SchiedZhangHotPotato}, and it follows from Theorem 2.7 of \cite{SchiedZhangHotPotato} that the critical transaction cost level in this case is $\frac14$. Moreover, it can be seen from Figures~\ref{fig:w_gamma} and~\ref{fig:w_neg_vs_gamma} that, in contrast to $\bm v$, the oscillations of the vector $\bm w$ are influenced by changing the risk aversion $\gamma$. More precisely, increasing $\gamma$ does have a diminishing effect on the oscillations of $\bm w$. Therefore, we conjecture that
\begin{equation}\label{conj 2}
\theta^*_w=\sup_{N,\gamma}\theta_w(N,\gamma)=\frac14.
\end{equation}
This conjecture is also supported by Figure~\ref{fig:thetaw} and consistent with Theorem 2.7 of \cite{SchiedZhangHotPotato}.

\begin{remark}
For simplicity, we performed the above simulations under the assumption that $G$ is of the form $G(t) = e^{- t}$. One can instead assume that $G(t) = e^{- \rho t}$ for some $\rho > 0$ and verify that the above numerical results remain valid. Note that for large $\rho$, one will need a sufficiently large $N$ to visualize the convergence of $\theta$ to the critical values. Moreover, we have extended the simulations to a power law decay kernel of the form $G(t) = 1/(1+t)^p$ with $p > 0$. The behaviours of vector $\bm{v}$ and $\bm{w}$ as well as the critical values for $\theta$ are again consistent with the above analysis. The corresponding plots are therefore omitted.
\end{remark}

\begin{remark}As mentioned above, conjectures \eqref{conj 1} and \eqref{conj 2} are proved in \cite{SchiedZhangHotPotato} for the special case of two risk-neutral agents. Already in this special case, the proof of \eqref{conj 1} is quite involved. It relies in part on the fact that $\Gamma^{0,0}$ is a Kac--Murdock--Szeg\H o matrix, whose inverse is known explicitly.  For $\gamma>0$, however, the inverse of $\Gamma^{\gamma,\theta}$ is not known, and so the proof method from \cite{SchiedZhangHotPotato}. The proof of \eqref{conj 2} in the special case of  \cite{SchiedZhangHotPotato} relies on the fact that $\Gamma^{0,\theta}-\widetilde\Gamma$ is an upper triangular Toeplitz matrix. The inverse of such a matrix can be computed by the coefficients of the reciprocal of the power series formed from the coefficients of  $\Gamma^{0,\theta}-\widetilde\Gamma$. Then the celebrated Kalusza sign criterion is applied in \cite{SchiedZhangHotPotato} so as to characterize the case in which all coefficients of the reciprocal power series are nonnegative. For $\gamma>0$, however, $\Gamma^{0,\theta}-\widetilde\Gamma$ is no longer an upper triangular Toeplitz matrix, and so the proof from \cite{SchiedZhangHotPotato} cannot be extended to our present situation.
\end{remark}

\begin{figure}[H]
\centering
\includegraphics[width=\textwidth]{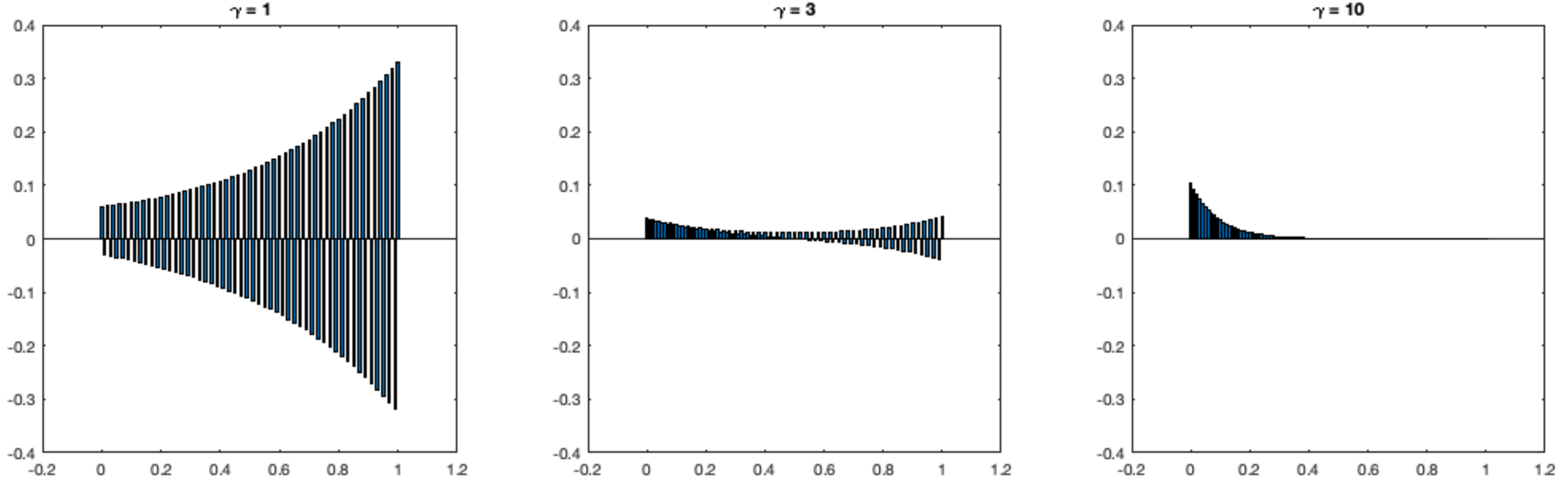}
\caption{Vector $\bm{w}$ with $\gamma = 1$ (left), $\gamma = 3$ (middle) and $\gamma = 10$ (right) for $N= 100$,  and $\theta = 0$.}
\label{fig:w_gamma}
\end{figure}

\begin{figure}[H]
\centering
\includegraphics[scale=0.25]{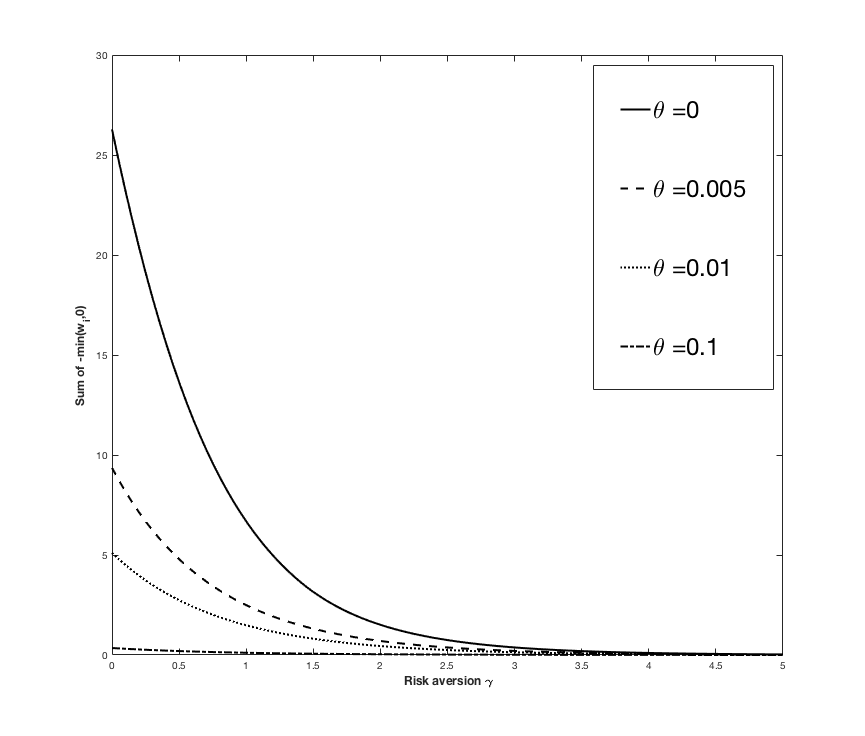}
\caption{Magnitude of the sum of negative components in vector $\bm{w}$ as a function of $\gamma$ with $\theta \in \{0,0.005,0.01,0.1\}$ for $N= 100$.}
\label{fig:w_neg_vs_gamma}
\end{figure}

\begin{figure}[H]
\centering
\includegraphics[scale=0.23]{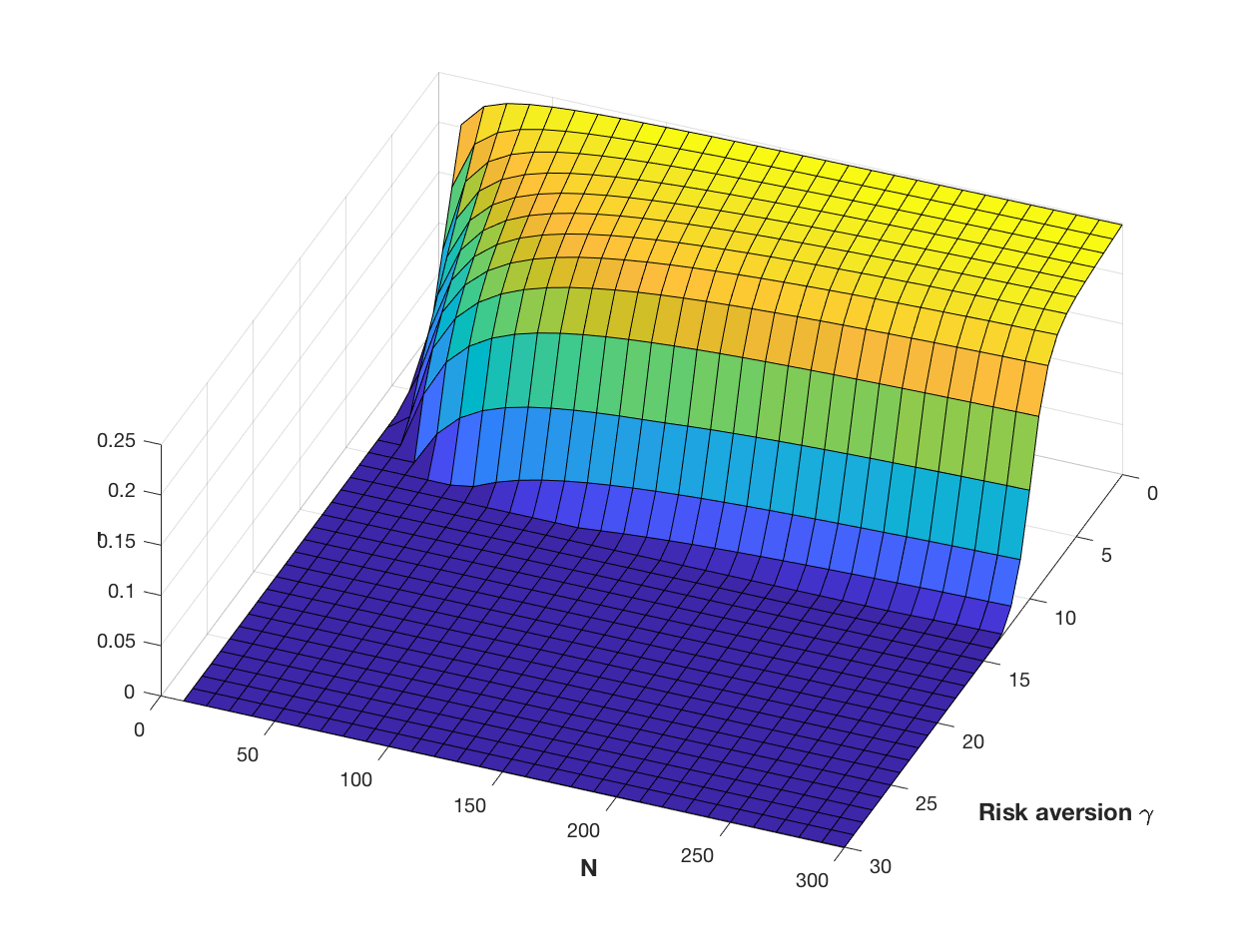}
\caption{Surface plot of $\theta_w(N,\gamma)$ with respect to $N$ and the risk aversion parameter $\gamma$.}
\label{fig:thetaw}
\end{figure}

It is interesting to note that for $\gamma=0$ and $\theta=\frac14$, the  vector $\bm w$ has a particularly simple structure. This is stated in the following theoretical result.

\begin{proposition}
\label{prop:w_rho}
Under Assumption~\ref{assumption 1} (i), (iii), and $G(t) = e^{- \rho t}$ with $\rho > 0$, for $\theta =\frac{1}{4}$ and $\gamma = 0$, 
\[
w_0=\cdots=w_{N-1}= \frac{1-e^{-\rho/N}}{N(1-e^{-\rho/N})+1}\qquad\text{and}\qquad w_N= \frac{1}{N(1-e^{-\rho/N})+1}.
\]
\end{proposition}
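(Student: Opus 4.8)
The plan is to evaluate the right-hand side of \eqref{eq:w} directly, exploiting the fact that the matrix $\Gamma^{\gamma,\theta}-\widetilde\Gamma$ degenerates to a particularly simple form when $\gamma=0$ and $\theta=\frac14$. Throughout I would write $q:=e^{-\rho/N}$ and index all vectors and matrices by $0,1,\dots,N$.

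First I would identify the matrix. Since $t_i=i/N$ and $G(t)=e^{-\rho t}$, we have $\Gamma^{0,0}_{ij}=G(|t_i-t_j|)=q^{|i-j|}$, and with $\gamma=0$, $\theta=\frac14$ the definition \eqref{Gamma} gives $\Gamma^{0,1/4}_{ij}=q^{|i-j|}+\frac12\delta_{ij}$ (using $\varphi(0)=0$). Feeding $\Gamma^{0,0}$ into \eqref{Gamma_tilde} and subtracting, a one-line computation shows that $M:=\Gamma^{0,1/4}-\widetilde\Gamma$ is upper triangular with $M_{ij}=q^{\,j-i}$ for $i\le j$ — in particular $M_{ii}=\tfrac12 G(0)+2\theta=1$ — and $M_{ij}=0$ for $i>j$. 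Thus $M$ is an upper-triangular Toeplitz matrix, hence invertible, so $\bm w$ is well defined.

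Next I would invert $M$. The formal power series associated with $M$ is $\sum_{k\ge0}q^k z^k=(1-qz)^{-1}$, whose reciprocal is the polynomial $1-qz$; therefore $M^{-1}$ is the bidiagonal matrix with $1$ on the diagonal, $-q$ on the first superdiagonal, and zeros elsewhere. (If one prefers, this is checked directly in a line by multiplying out $M M^{-1}$.) It follows that the components of $M^{-1}\bm1$ equal $1-q$ for the first $N$ rows and $1$ for the last row, and hence $\bm1^\top M^{-1}\bm1=N(1-q)+1>0$. Substituting into \eqref{eq:w} and recalling $q=e^{-\rho/N}$ then gives exactly $w_0=\cdots=w_{N-1}=\tfrac{1-e^{-\rho/N}}{N(1-e^{-\rho/N})+1}$ and $w_N=\tfrac1{N(1-e^{-\rho/N})+1}$.

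I do not expect a serious obstacle here; the argument is essentially a direct computation once the Toeplitz structure of $M$ is recognized. The only points deserving a little care are the boundary of the last row of $M^{-1}$ — this is precisely what makes $w_N$ differ from the other components — and the non-vanishing of the normalizing constant $\bm1^\top M^{-1}\bm1$, both of which are immediate from the explicit expressions above.
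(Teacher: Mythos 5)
Your proof is correct. It differs from the paper's in a small but genuine way: the paper guesses the (unnormalized) answer $\bm\omega$ with $\omega_0=\dots=\omega_{N-1}=1$, $\omega_N=1/(1-e^{-\rho/N})$, and \emph{verifies} $(\Gamma^{0,\theta}-\widetilde\Gamma)\bm\omega=c\bm 1$ by summing a geometric series, whereas you \emph{derive} the answer by recognizing $M=\Gamma^{0,1/4}-\widetilde\Gamma$ as the upper-triangular Toeplitz matrix with entries $q^{j-i}$ and inverting it explicitly via the reciprocal power series $1-qz$, obtaining the bidiagonal $M^{-1}$ and then $M^{-1}\bm1$ directly. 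Both hinge on the same key identity $\tfrac12 G(0)+2\theta=G(0)$ at $\theta=\tfrac14$, which is what collapses the diagonal into the Toeplitz pattern. Your route is the one the authors themselves attribute (in a remark) to the earlier two-player analysis of the critical threshold; it yields slightly more information (the full inverse, not just its action on $\bm1$) and requires no lucky guess, at the cost of either invoking the Toeplitz inversion formula or checking $MM^{-1}=\mathrm{Id}$ by hand, which you correctly note is a one-line multiplication. One cosmetic remark: your parenthetical ``using $\varphi(0)=0$'' is unnecessary, since with $\gamma=0$ the term $\gamma\varphi(t_i\wedge t_j)$ vanishes identically regardless of $\varphi$; this does not affect the argument.
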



\subsection{Infinite time horizon}
\label{sec:inf}

For non-vanishing risk aversion $\gamma>0$, it is possible to study our problem also for an infinite time horizon. The intuitive reason is that any risk-averse investor will automatically try to liquidate any position held in an asset whose price process is a martingale.

\begin{assumption}\label{assumption 2}Throughout Section~\ref{sec:inf}, we make the following assumptions.
\begin{enumerate}[{\rm(i)}]
\item The time grid is $\mathbb N_0=\{0,1,\dots\}$.
\item $G$ is of the form $G(t) = e^{- \rho t}$ for some $\rho>0$.
\item $S^0$ is a Bachelier model of the form \eqref{opt_strat}.
\end{enumerate}
\end{assumption}

Under Assumption~\ref{assumption 2} (i), the strategy of an agent $i$ with initial position $X_i\in\mathbb R$ will be represented by a sequence $\bm\xi=(\xi_0,\xi_1,\dots)$ of random variables such that the following conditions are satisfied:
\begin{itemize}
\item each $\xi_i$ is $\mathscr F_i$-measurable;
\item the random variable $\bm\xi$ takes values in the space $\ell^1$ of absolutely summable real sequences;
\item the random variable $\bm\xi$ is  bounded in the Banach space $\ell^\infty$ of bounded sequences;
\item we have $\sum_{k=0}^\infty\xi_k(\omega)=X_i$ for each $\omega\in\Omega$.
\end{itemize}
   The set of all these strategies will be denoted by $\mathscr X(X_i,\mathbb N_0)$. Again, the class of all deterministic strategies in $\mathscr X(X_i,\mathbb N_0)$ will be denoted by $\mathscr X_{\mathrm{det}}(X_i,\mathbb N_0)$. Since $\ell^1\subset\ell^2$, it is clear that \eqref{cost functional} can be extended as follows to strategies $\bm\xi_i\in\mathscr X(X_i,\mathbb N_0)$, $i=1,\dots,n$,
 $$ \mathscr{C}_{\mathbb{N}_0}(\bm{\xi}_i  | \bm{\xi}_{-i} )= 
 \sum^\infty_{k=0}\Big[\frac{G(0)}{2}\xi_{i,k}^2 - S^{\Xi}_{t_k}\xi_{i,k}  + \frac{G(0)}2\xi_{i,k} \sum_{j\neq i}\xi_{j,k}  + \theta\xi_{i,k}^2\Big]. $$
Again, each agent will aim to minimize the following mean-variance functional,
$$\mathrm{MV}_{\gamma}(\bm{\xi}_i  | \bm{\xi}_{-i}) = \mathbb{E}[\mathscr{C}_{\mathbb{N}_0}(\bm{\xi}_i  | \bm{\xi}_{-i})] + \frac{\gamma}{2}\mathrm{Var}  [\mathscr{C}_{\mathbb{N}_0}(\bm{\xi}_i  | \bm{\xi}_{-i})],\qquad \bm\xi_i\in\mathscr X_{\mathrm{det}}(X_i,\mathbb N_0),
$$
or to maximize the CARA utility functional
$$U_\gamma(\bm\xi_i|\bm\xi_{-i})=\mathbb E[u_\gamma(-\mathscr{C}_{\mathbb{N}_0}(\bm{\xi}_i  | \bm{\xi}_{-i}))],\qquad \bm\xi_i\in\mathscr X(X_i,\mathbb N_0).
$$
The notion of  Nash equilibria for mean-variance optimization and CARA utility maximization can be defined in exactly  the same way as in Definition~\ref{def:nash}.  However, it is not clear \emph{a priori} whether the formulas \eqref{eq:v} and \eqref{eq:w} for $\bm v$ and $\bm w$ can also be extended to an infinite time horizon, because it is no longer clear whether the vector $\bm 1$ belongs to the range of the linear operators $\Gamma^{\gamma,\theta} +(n-1)\widetilde{\Gamma}$ and $\Gamma^{\gamma,\theta} -\widetilde{\Gamma}$. 
The following result states the existence of an infinite-horizon Nash equilibrium in a specific situation.



\begin{theorem}
\label{thm:infv}In addition to Assumption~\ref{assumption 2}, suppose that $\gamma>0$  and 
\begin{equation}\label{critical theta}
\theta = \theta^* := \frac{n-1}{4}.
\end{equation}
Then there exist  unique positive solution $\alpha$ and $\beta$ of the two equations
\begin{align}
\label{eq:infv}
0&=\frac{1}{e^{(\alpha + \rho)}-1}-\frac{n}{e^{(\alpha - \rho)}-1} - \frac{\gamma\sigma^2 e^{-\alpha}}{(1-e^{-\alpha})^2} ,\\
0&=2\theta + \frac{1}{2} + \frac{1}{e^{(\beta+\rho)}-1}-\frac{\gamma\sigma^2 e^{-\beta}}{(1-e^{-\beta})^2}.\label{eq:infw}
\end{align}
Moreover, $\alpha\in(0,\rho)$. For these, we define $\bm v\in\ell^1$ through
\begin{align*}
v_0=\frac{e^\alpha-1}{e^\alpha-e^{\alpha-\rho}}\qquad\text{and, for $i=1,2,\dots$,}\qquad v_i=\frac{e^{-\alpha i}}{\frac1{e^\alpha-1}+\frac1{1-e^{\alpha-\rho}}}
\end{align*}
and $\bm w\in\ell^1$ through
$$w_i=(1-e^{-\beta})e^{-\beta i}.
$$
Then, for any initial positions $X_1,\dots, X_n$, the strategies 
\begin{equation}
\label{opt_strat_infinity}
\bm{\xi^*_i} = \bar{X} \bm{v}+ (X_i - \bar{X}) \bm{w},\qquad i=1,\dots, n.
\end{equation}
form a Nash equilibrium for  mean-variance optimization in $\mathscr X_{\mathrm{det}}(X_1,\mathbb N_0)\times\cdots\times \mathscr X_{\mathrm{det}}(X_n,\mathbb N_0)$ and a Nash equilibrium for CARA utility maximization in  $\mathscr X (X_1,\mathbb N_0)\times\cdots\times \mathscr X(X_n,\mathbb N_0)$.
\end{theorem}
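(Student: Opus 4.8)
The plan is to reduce the Nash property, for both functionals, to a single deterministic convex minimization and then to exhibit the stated $\bm v,\bm w$ as its solution. Carrying the computation behind Theorem~\ref{maintheorem} over to the grid $\mathbb N_0$: with $\bm y=\sum_m\bm\xi_m$, $\bm z_{-i}=\sum_{j\ne i}\bm\xi_j$, $G(0)=1$, and summation by parts in the noise term, one obtains the pathwise identity $\mathscr C_{\mathbb N_0}(\bm\xi_i\,|\,\bm\xi_{-i})=-S_0X_i-\sigma\sum_{k\ge1}R_{i,k}\Delta B_k+\tfrac12\bm\xi_i^\top\Gamma^{0,0}\bm\xi_i+\theta\|\bm\xi_i\|_2^2+\bm\xi_i^\top\widetilde\Gamma\bm z_{-i}$, where $R_{i,k}=\sum_{m\ge k}\xi_{i,m}$ is $\mathscr F_{k-1}$-measurable and all series converge for admissible $\bm\xi_i$ since $\Gamma^{0,0},\widetilde\Gamma$ are bounded Toeplitz operators on $\ell^2\supset\ell^1$ and the stochastic sum is the limit of an $L^2$-martingale. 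Since $\mathrm{Var}\big(\sum_kS^0_{t_k}\xi_{i,k}\big)=\sigma^2\sum_kR_{i,k}^2=\bm\xi_i^\top\Phi\bm\xi_i$ with $\Phi_{k\ell}=\sigma^2(k\wedge\ell)$, this gives $\mathrm{MV}_\gamma(\bm\xi_i\,|\,\bm\xi_{-i})=-S_0X_i+J_i(\bm\xi_i)$, where $J_i(\bm\xi):=\tfrac12\bm\xi^\top\Gamma^{\gamma,\theta}\bm\xi+\bm\xi^\top\widetilde\Gamma\bm z_{-i}$ and $\Gamma^{\gamma,\theta}=\Gamma^{0,0}+\gamma\Phi+2\theta I$. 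As $\Gamma^{0,0}$ is positive semidefinite (Assumption~\ref{pos def assu}), $\Phi$ is a covariance matrix and $\theta\ge0$, the functional $J_i$ is convex, so a feasible $\bm\xi_i^*$ minimizes $J_i$ over $\mathscr X_{\mathrm{det}}(X_i,\mathbb N_0)$ as soon as $\Gamma^{\gamma,\theta}\bm\xi_i^*+\widetilde\Gamma\bm z_{-i}^*=\lambda_i\bm1$ as a sequence for some scalar $\lambda_i$: for any competing $\bm\eta$ with $\bm1^\top\bm\eta=X_i$ one has $J_i(\bm\eta)-J_i(\bm\xi_i^*)=\lambda_i\bm1^\top(\bm\eta-\bm\xi_i^*)+\tfrac12(\bm\eta-\bm\xi_i^*)^\top\Gamma^{\gamma,\theta}(\bm\eta-\bm\xi_i^*)\ge0$, the first term vanishing.

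Substituting the ansatz \eqref{opt_strat_infinity} and using $\bm z_{-i}^*=(n-1)\bar X\bm v+(\bar X-X_i)\bm w$, the stationarity condition becomes $\bar X\,[\Gamma^{\gamma,\theta}+(n-1)\widetilde\Gamma]\bm v+(X_i-\bar X)\,[\Gamma^{\gamma,\theta}-\widetilde\Gamma]\bm w=\lambda_i\bm1$, so it suffices to produce $\bm v,\bm w\in\ell^1$ with $\bm1^\top\bm v=\bm1^\top\bm w=1$ and $[\Gamma^{\gamma,\theta^*}+(n-1)\widetilde\Gamma]\bm v=c_v\bm1$, $[\Gamma^{\gamma,\theta^*}-\widetilde\Gamma]\bm w=c_w\bm1$ for scalars $c_v,c_w$. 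The value $\theta^*=\tfrac{n-1}4$ enters first merely to streamline the matrices: $[\Gamma^{\gamma,\theta^*}+(n-1)\widetilde\Gamma]_{ij}=n\,e^{-\rho(i-j)}+\gamma\sigma^2 j$ for $i\ge j$ and $e^{-\rho(j-i)}+\gamma\sigma^2 i$ for $i<j$, while $[\Gamma^{\gamma,\theta^*}-\widetilde\Gamma]_{ij}=\gamma\sigma^2 j$ for $i>j$, $\tfrac n2+\gamma\sigma^2 i$ for $i=j$, and $e^{-\rho(j-i)}+\gamma\sigma^2 i$ for $i<j$. Inserting $w_i=(1-e^{-\beta})e^{-\beta i}$, resp.\ the ansatz for $\bm v$ with a distinguished head $v_0$ and a geometric tail $v_i\propto e^{-\alpha i}$ ($i\ge1$), and summing the relevant finite and infinite geometric series row by row, every component of the product becomes a constant plus a multiple of $e^{-\beta i}$, resp.\ of $e^{-\alpha i}$ and $e^{-\rho i}$. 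For $\bm v$, the lower-triangular partial sums $\sum_{1\le j\le i}e^{-\rho(i-j)}e^{-\alpha j}$ generate — because $\alpha<\rho$, so $e^{\rho-\alpha}>1$ — a spurious $e^{-\rho i}$ term whose coefficient cancels against the distinguished column $j=0$ exactly when $v_0$ has the stated value; this also pins down the range $\alpha\in(0,\rho)$ (needed as well for $v_0>0$ and positivity of the tail). The coefficient of the remaining mode $e^{-\alpha i}$, resp.\ $e^{-\beta i}$, then vanishes precisely when $\alpha$ solves \eqref{eq:infv}, resp.\ $\beta$ solves \eqref{eq:infw}, and a short computation shows the $i=0$ row is afterwards automatically consistent — and this last consistency is an identity only when $\theta=\theta^*$, which is the structural origin of the critical value. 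The leftover constants $c_v,c_w$ are positive because $\gamma>0$, the normalizations $\bm1^\top\bm v=\bm1^\top\bm w=1$ are elementary geometric-series identities, and $\bm v,\bm w\in\ell^1$ since $\alpha,\beta>0$. Finally, \eqref{eq:infv}–\eqref{eq:infw} have the required roots: existence follows from the intermediate value theorem (e.g.\ putting $x=e^\beta>1$, \eqref{eq:infw} reads $\gamma\sigma^2\tfrac{x}{(x-1)^2}=2\theta^*+\tfrac12+\tfrac1{xe^\rho-1}$, whose left side decreases from $+\infty$ to $0$ while the right side stays in a bounded positive interval; the right side of \eqref{eq:infv} runs from $-\infty$ as $\alpha\downarrow0$ to $+\infty$ as $\alpha\uparrow\rho$), and uniqueness by recasting each equation as $L(x)=R(x)$ with $L$ strictly convex and increasing from $0$ and $R$ strictly concave and bounded, so that $L-R$ is strictly convex, negative at the left end and positive at the right, hence has a single zero.

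For the CARA assertion I would argue as in Theorem~\ref{maintheorem}: fix $i$, let the others use the deterministic $\bm\xi_{-i}^*$, and take any adapted $\bm\eta\in\mathscr X(X_i,\mathbb N_0)$. The pathwise identities above give $e^{\gamma\mathscr C_{\mathbb N_0}(\bm\eta\,|\,\bm\xi_{-i}^*)}=e^{-\gamma S_0X_i}\,N_\infty\,e^{\gamma J_i(\bm\eta)}$, where $N_m=\exp\!\big(-\gamma\sigma\sum_{k\le m}R_{\eta,k}\Delta B_k-\tfrac12\gamma^2\sigma^2\sum_{k\le m}R_{\eta,k}^2\big)$ is a martingale with $\mathbb E[N_\infty]=1$ (here $\tfrac12\gamma^2\sigma^2\sum_kR_{\eta,k}^2=\tfrac12\gamma^2\bm\eta^\top\Phi\bm\eta$ is used to assemble the exponent $\gamma J_i(\bm\eta)$). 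Since $\bm\eta(\omega)\in\mathscr X_{\mathrm{det}}(X_i,\mathbb N_0)$ for every $\omega$, the mean–variance part yields $J_i(\bm\eta(\omega))\ge J_i(\bm\xi_i^*)$ pointwise, so $\mathbb E\big[e^{\gamma\mathscr C_{\mathbb N_0}(\bm\eta\,|\,\bm\xi_{-i}^*)}\big]\ge e^{-\gamma S_0X_i+\gamma J_i(\bm\xi_i^*)}\mathbb E[N_\infty]=\mathbb E\big[e^{\gamma\mathscr C_{\mathbb N_0}(\bm\xi_i^*\,|\,\bm\xi_{-i}^*)}\big]$, i.e.\ $U_\gamma(\bm\eta\,|\,\bm\xi_{-i}^*)\le U_\gamma(\bm\xi_i^*\,|\,\bm\xi_{-i}^*)$. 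I expect the main obstacle to be the verification in the second paragraph: guessing the correct piecewise-geometric form of $\bm v$, recognizing that the $e^{-\rho i}$ terms simultaneously force the special value of $v_0$ and the range $\alpha\in(0,\rho)$, and checking that the $i=0$ row of the $\bm v$-identity is consistent exactly at $\theta=\theta^*$; the surrounding points (absolute convergence of the infinite sums, semidefiniteness of $\Gamma^{\gamma,\theta}$ on $\ell^1$, the integrability of $N$ under the $\ell^1/\ell^\infty$ admissibility constraints, and the one-variable analysis of \eqref{eq:infv}–\eqref{eq:infw}) are routine but require care.
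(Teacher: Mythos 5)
Your construction of $\bm v$ and $\bm w$ and the verification of the mean--variance Nash property follow essentially the same route as the paper: the infinite-horizon analogue of Lemma~\ref{lemma:mvf} obtained by summation by parts, sufficiency of the stationarity condition $\Gamma^{\gamma,\theta}\bm\xi_i^*+\widetilde\Gamma\bm z_{-i}^*=\lambda_i\bm 1$ by completing the square, and the row-by-row geometric-series verification that $[\Gamma^{\gamma,\theta^*}+(n-1)\widetilde\Gamma]$ and $[\Gamma^{\gamma,\theta^*}-\widetilde\Gamma]$ map the proposed sequences to positive multiples of $\bm 1$. Your matrix entries are correct, and your reading of the computation (the distinguished $v_0$ cancelling the spurious $e^{-\rho i}$ mode, the $e^{-\alpha i}$ coefficient vanishing exactly at a root of \eqref{eq:infv}, the role of $\theta^*$ in aligning the diagonal with the off-diagonal exponential pattern) matches the paper's proof precisely. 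You diverge in two places. First, for CARA utility the paper reduces to deterministic strategies by citing the result that an optimal adapted response to deterministic competitors is deterministic, whereas you argue directly through the exponential martingale $N_m$ and the pathwise inequality $J_i(\bm\eta(\omega))\ge J_i(\bm\xi_i^*)$. This is more self-contained, but the step you need is $\mathbb E[N_\infty]\ge 1$, while the discrete stochastic exponential in general satisfies only $\mathbb E[N_\infty]\le 1$ by Fatou; equality requires uniform integrability, e.g.\ a Novikov-type control of $\sum_k R_{\eta,k}^2$, which does not follow automatically from the $\ell^1/\ell^\infty$ admissibility conditions. You flag this, but it is the one step where the inequality could genuinely point the wrong way and needs an explicit hypothesis or truncation argument. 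Second, for uniqueness of $\alpha$ and $\beta$ the paper differentiates: Lemma~\ref{alpha lemma} rewrites \eqref{eq:infv} in hyperbolic form and shows the resulting $f$ is strictly increasing on $(0,\rho)$, and Lemma~\ref{beta lemma} runs a two-case monotonicity analysis. Your convex-versus-concave recasting is only asserted; for \eqref{eq:infw} it can be made to work (with $x=e^\beta$, multiply by $(x-1)^2$: the left side becomes strictly convex and the right side linear), but for \eqref{eq:infv} the three terms have mixed monotonicities on $(0,\rho)$ and no such splitting is exhibited, so that step should be replaced by the explicit derivative computation.
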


In the preceding theorem, we have assumed that $\theta=\theta^*$. In the general case, the following result will follow immediately from the proof of Theorem \ref{thm:infv}.

\goodbreak 

\begin{proposition}\label{nonexist prop}In addition to Assumption~\ref{assumption 2}, suppose that $\gamma>0$.
\begin{enumerate}
\item If $X_1+\cdots+X_n=0$, then  the formula $\bm\xi^*_i=X_i\bm w$, for $\bm w$  as in Theorem \ref{thm:infv},
 provides a Nash equilibrium for  every choice  of $\theta\ge0$.
 \item If $X_1=\cdots=X_n$, then there is no Nash equilibrium whose strategies decay exponentially in time unless the condition $\theta = \theta^* := \frac{n-1}{4}$ is satisfied.
\end{enumerate}
\end{proposition}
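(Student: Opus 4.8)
The plan is to read off both statements directly from the first-order analysis in the proof of Theorem~\ref{thm:infv}, so almost nothing beyond that proof is needed. Recall the relevant structure: after inserting the price process \eqref{price} and using that $S^0$ is a square-integrable martingale, the mean--variance functional of agent~$i$ against fixed deterministic competitors reads
\[
\mathrm{MV}_\gamma(\bm\xi_i\mid\bm\xi_{-i})=\tfrac12\bm\xi_i^\top\Gamma^{\gamma,\theta}\bm\xi_i+\bm\xi_i^\top\widetilde\Gamma\textstyle\sum_{j\ne i}\bm\xi_j+\mathrm{const},
\]
now as a quadratic form on $\ell^1$, with $\Gamma^{\gamma,\theta}=\Gamma^{0,0}+2\theta I+\gamma\Phi$ and $\Phi_{jk}=\varphi(t_j\wedge t_k)$. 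Here $\Gamma^{\gamma,\theta}$ is strictly positive definite ($\Gamma^{0,0}$ by Assumption~\ref{pos def assu}, $2\theta I\succeq0$, and $\gamma\Phi\succeq0$ since $\bm\xi^\top\Phi\bm\xi=\mathrm{Var}(\sum_k\xi_k S^0_{t_k})$), so minimisation over $\{\bm1^\top\bm\xi=X_i\}$ is governed by the single condition $\Gamma^{\gamma,\theta}\bm\xi_i+\widetilde\Gamma\sum_{j\ne i}\bm\xi_j\in\mathbb R\bm1$, and the CARA problem reduces to the same condition by the argument used in the proof of Theorem~\ref{thm:infv} (a deterministic response is optimal among adapted ones in the Bachelier model).

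For part~(1), $X_1+\dots+X_n=0$ gives $\sum_{j\ne i}X_j=-X_i$, so when every competitor plays $X_j\bm w$ the optimality condition for $\bm\xi_i^*=X_i\bm w$ collapses to $(\Gamma^{\gamma,\theta}-\widetilde\Gamma)\bm w\in\mathbb R\bm1$. Since $\Gamma^{\gamma,\theta}-\widetilde\Gamma=\widetilde\Gamma^{\top}+2\theta I+\gamma\Phi$, inserting $w_k=(1-e^{-\beta})e^{-\beta k}$ and summing the geometric series (as done in the proof of Theorem~\ref{thm:infv}) shows that $(\Gamma^{\gamma,\theta}-\widetilde\Gamma)\bm w$ is constant in $k$ with no boundary defect at $k=0$, precisely when $\beta$ solves \eqref{eq:infw}; and \eqref{eq:infw} has a unique positive root for every $\theta\ge0$, because its left-hand side runs from $-\infty$ at $\beta\downarrow0$ to $2\theta+\tfrac12>0$ at $\beta\to\infty$. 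As $\bm w\in\ell^1$ is bounded with $\bm1^\top\bm w=1$, the strategies $X_i\bm w$ form a Nash equilibrium for every $\theta\ge0$, and $\theta=\theta^*$ is never invoked.

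For part~(2), let $X_1=\dots=X_n=X\ne0$ (the case $X=0$ being part~(1)), and suppose a Nash equilibrium with exponentially decaying strategies existed. Summing the $n$ first-order conditions gives $(\Gamma^{\gamma,\theta}+(n-1)\widetilde\Gamma)\bm\Sigma\in\mathbb R\bm1$ for $\bm\Sigma:=\sum_j\bm\xi_j^*\in\ell^1$, with $\bm1^\top\bm\Sigma=nX\ne0$, so after normalisation this produces a candidate $\bm v$ that still decays exponentially. As established in the proof of Theorem~\ref{thm:infv} (by applying a finite-band difference operator that inverts the exponential kernel, together with a second difference that makes the $\Phi$-part local), any exponentially decaying solution of $(\Gamma^{\gamma,\theta}+(n-1)\widetilde\Gamma)\bm v\in\mathbb R\bm1$ is of the form $v_k=ce^{-\alpha k}$ for $k\ge1$ with $v_0$ a free boundary value. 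Evaluating $(\Gamma^{\gamma,\theta}+(n-1)\widetilde\Gamma)\bm v$ on this ansatz by geometric summation gives, for $k\ge1$, an $e^{-\alpha k}$-mode, an $e^{-\rho k}$-mode and a constant, plus a separate value at $k=0$; requiring both non-constant modes to vanish, the $k=0$ value to match the constant, and $\bm1^\top\bm v=1$, yields four scalar equations for $(\alpha,c,v_0)$. Vanishing of the $e^{-\rho k}$-mode and the normalisation fix $v_0$ and $c$ in terms of $\alpha$; vanishing of the $e^{-\alpha k}$-mode reduces to $F(\alpha)=2(\theta^*-\theta)$, and the $k=0$ equation reduces to $F(\alpha)=2(\theta-\theta^*)/(e^{\alpha-\rho}-1)$, where $F$ is the left-hand side of \eqref{eq:infv}. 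If $\theta=\theta^*$, both reduce to \eqref{eq:infv}, whose unique root $\alpha\in(0,\rho)$ then yields the vector $\bm v$ of Theorem~\ref{thm:infv}; if $\theta\ne\theta^*$, equating the two right-hand sides and cancelling $2(\theta^*-\theta)\ne0$ forces $e^{\alpha-\rho}=0$, which is impossible. Hence no such $\bm v$, and therefore no exponentially decaying Nash equilibrium, can exist unless $\theta=\theta^*$.

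The genuinely hard part is all contained in the proof of Theorem~\ref{thm:infv}: (i) making the passage from the infinite-horizon optimisation problems to the operator condition $\Gamma^{\gamma,\theta}\bm\xi_i+\widetilde\Gamma\sum_{j\ne i}\bm\xi_j\in\mathbb R\bm1$ rigorous --- checking that the relevant vectors lie in the appropriate range, that the CARA functional is finite and reduces to the mean--variance one, and that the first-order condition is sufficient by strict convexity; and (ii) showing that an exponentially decaying solution of the aggregate condition must have the rigid ``one free entry, then a single geometric tail'' shape that pins the system down to three unknowns. Granting those, the only new ingredient here is the elementary observation that the $e^{-\alpha k}$-mode equation and the $k=0$ equation share a root $\alpha$ if and only if $\theta=\theta^*$.
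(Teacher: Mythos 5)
Your proposal follows the route the paper itself intends: Proposition~\ref{nonexist prop} is presented as an immediate by-product of the proof of Theorem~\ref{thm:infv}, and you extract both parts correctly from that proof --- the computation of $(\Gamma^{\gamma,\theta}-\widetilde\Gamma)\bm\omega$ never invokes $\theta=\theta^*$ (only Lemma~\ref{beta lemma}, valid for every $\theta\ge0$), which yields part (1); and for part (2) the exponential ansatz for $\sum_j\bm\xi_j^*$ produces the two scalar conditions $F(\alpha)=2(\theta^*-\theta)$ (from the $e^{-\alpha i}$-mode, $i\ge1$) and $F(\alpha)=2(\theta^*-\theta)/(1-e^{\alpha-\rho})$ (from $i=0$), with $F$ the left-hand side of \eqref{eq:infv}, and these are compatible only when $\theta=\theta^*$. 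One caveat: the claim that every exponentially decaying solution of the aggregate condition has the rigid form ``free $v_0$ followed by a single geometric tail'' is \emph{not} established in the proof of Theorem~\ref{thm:infv}, which only verifies a specific ansatz; making that classification rigorous (e.g.\ by reducing the condition to a constant-coefficient recursion via finite differences and showing its characteristic polynomial has only one root inside the unit disk) is a genuine missing step --- but it is one the paper's own one-line justification omits as well, so your argument is faithful to, and no less complete than, the intended proof.
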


As a matter of fact, we conjecture that in the situation of Proposition \ref{nonexist prop} (b), no Nash equilibrium exists unless $\theta=\theta^*$ holds. The situation is very similar to the one of Theorem 4.5 in~\cite{SchiedStrehleZhang}, where a continuous-time version of the game for $n=2$ and $\gamma=0$ was analyzed. It was shown  there that a continuous-time Nash equilibrium can exist only if $\theta=\theta^*$ or $X_1=X_2=0$. In both situations, the underlying intuition for the nonexistence of Nash equilibria results from the possibility of trading infinitely often, either in continuous time or over an infinite time horizon. This shows that the idealization of admitting infinitely many trades is not as harmless as it might seem, an observation that has also been made, for instance, in the context of the FTAP.

The qualitative behavior of the respective solutions $\alpha$ and $\beta$ of  \eqref{eq:infv} and  \eqref{eq:infw} is plotted in Figures~\ref{fig:alpha_rho_n} and~\ref{fig:beta_theta_inf}. In case $n=1$, we have the following explicit result. 
\begin{proposition}
\label{prop:infv}
If $n = 1$, then the solution $\alpha$ of \eqref{eq:infv} is given by
\begin{equation}
\label{eq:infv_n1}
\alpha = \cosh^{-1}\Big[\frac{\gamma \sigma^2 \cosh(\rho)+2\sinh(\rho)}{\gamma  \sigma^2 +2\sinh(\rho)}\Big].
\end{equation}
\end{proposition}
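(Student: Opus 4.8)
The plan is to specialize equation \eqref{eq:infv} to the case $n=1$ and solve the resulting scalar equation explicitly. Setting $n=1$ in \eqref{eq:infv} gives
\[
0 = \frac{1}{e^{\alpha+\rho}-1} - \frac{1}{e^{\alpha-\rho}-1} - \frac{\gamma\sigma^2 e^{-\alpha}}{(1-e^{-\alpha})^2}.
\]
First I would combine the two rational terms $\frac{1}{e^{\alpha+\rho}-1} - \frac{1}{e^{\alpha-\rho}-1}$ over a common denominator: the numerator becomes $(e^{\alpha-\rho}-1)-(e^{\alpha+\rho}-1) = e^{\alpha-\rho}-e^{\alpha+\rho} = -e^{\alpha}(e^{\rho}-e^{-\rho}) = -2e^{\alpha}\sinh\rho$, and the denominator is $(e^{\alpha+\rho}-1)(e^{\alpha-\rho}-1) = e^{2\alpha} - e^{\alpha+\rho} - e^{\alpha-\rho} + 1 = e^{2\alpha} + 1 - 2e^{\alpha}\cosh\rho$. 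Factoring out $e^{\alpha}$ from the denominator gives $e^{\alpha}\big(e^{\alpha}+e^{-\alpha} - 2\cosh\rho\big) = 2e^{\alpha}(\cosh\alpha - \cosh\rho)$, so the combined fraction simplifies to $-\sinh\rho/(\cosh\alpha-\cosh\rho)$.

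Next I would rewrite the last term similarly. Multiplying numerator and denominator of $\frac{\gamma\sigma^2 e^{-\alpha}}{(1-e^{-\alpha})^2}$ by $e^{\alpha}$ yields $\frac{\gamma\sigma^2}{e^{\alpha} - 2 + e^{-\alpha}} = \frac{\gamma\sigma^2}{2(\cosh\alpha - 1)}$. The equation therefore reduces to
\[
\frac{\sinh\rho}{\cosh\rho - \cosh\alpha} = \frac{\gamma\sigma^2}{2(\cosh\alpha - 1)},
\]
which is linear in $\cosh\alpha$ once cross-multiplied: $2\sinh\rho\,(\cosh\alpha - 1) = \gamma\sigma^2(\cosh\rho - \cosh\alpha)$, i.e. $\cosh\alpha\,(2\sinh\rho + \gamma\sigma^2) = \gamma\sigma^2\cosh\rho + 2\sinh\rho$. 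Solving for $\cosh\alpha$ and applying $\cosh^{-1}$ gives exactly \eqref{eq:infv_n1}. I would also note that Theorem~\ref{thm:infv} already guarantees that the solution $\alpha$ exists, is unique and positive, and lies in $(0,\rho)$; in particular the sign bookkeeping above (the denominator $\cosh\rho - \cosh\alpha$ is positive for $\alpha\in(0,\rho)$) is consistent, so $\cosh\alpha > 1$ and the argument of $\cosh^{-1}$ is a well-defined number in $(1,\cosh\rho)$, making the formula legitimate.

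There is no serious obstacle here: the statement is a direct computation once one recognizes the hyperbolic-function structure hidden in the exponentials, and all existence/uniqueness issues are inherited from Theorem~\ref{thm:infv}. The only mild care needed is to verify that the algebraic solution $\cosh\alpha = \frac{\gamma\sigma^2\cosh\rho + 2\sinh\rho}{\gamma\sigma^2 + 2\sinh\rho}$ indeed lands in the range $(1,\cosh\rho)$ so that it corresponds to the positive root $\alpha\in(0,\rho)$ rather than a spurious branch — this follows since the right-hand side is a weighted average of $\cosh\rho$ and $1$ with weights $\gamma\sigma^2$ and $2\sinh\rho$, both strictly positive, hence strictly between $1$ and $\cosh\rho$.
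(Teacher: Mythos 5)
Your proposal is correct and follows essentially the same route as the paper: the paper's proof invokes the hyperbolic rearrangement already established in the proof of Lemma~\ref{alpha lemma} (equation \eqref{pf:infv:eq2}), which for $n=1$ reduces to exactly your identity $\frac{\gamma\sigma^2}{2-2\cosh\alpha}-\frac{\sinh\rho}{\cosh\alpha-\cosh\rho}=0$, and then solves the linear equation for $\cosh\alpha$ and checks the argument of $\cosh^{-1}$ exceeds $1$. Your weighted-average observation that $\cosh\alpha$ lies strictly between $1$ and $\cosh\rho$ is a clean way to package the paper's concluding remark that $0<\alpha<\rho$.
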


  \begin{figure}[H]
  \begin{minipage}{9cm}
\includegraphics[width=9cm]{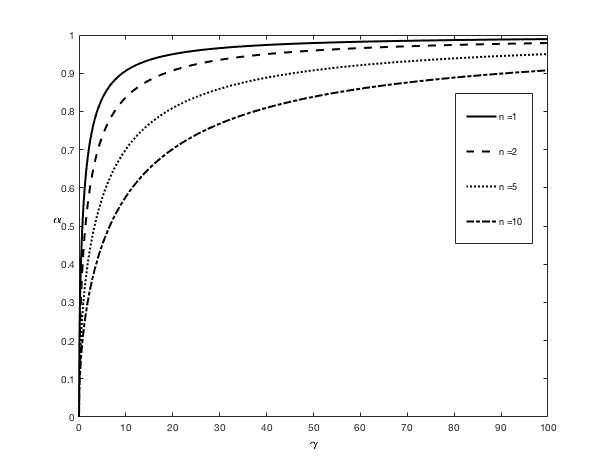}
\end{minipage}\quad
 \begin{minipage}{9cm}
\includegraphics[width=9cm]{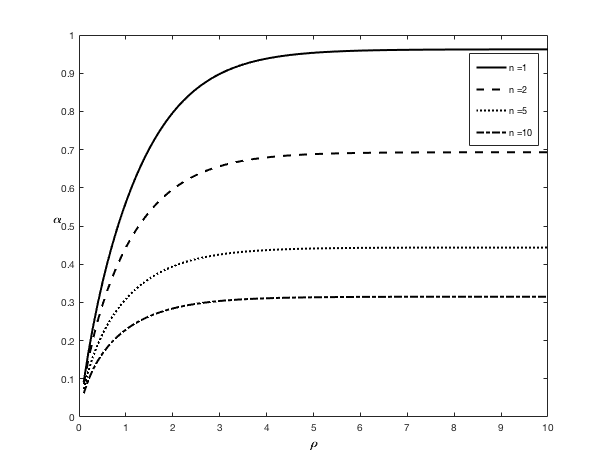}
\end{minipage}

\caption{The solution $\alpha$ of \eqref{eq:infv} as a function of $\gamma$ (left) and $\rho$ (right) with different values of $n$,  $\theta = \frac{n-1}{4}$, and the respective remaining parameters set to 1. }
\label{fig:alpha_rho_n}
\end{figure}

 \begin{figure}[H]
 \begin{minipage}{6cm}
\includegraphics[width=6cm,height=5cm]{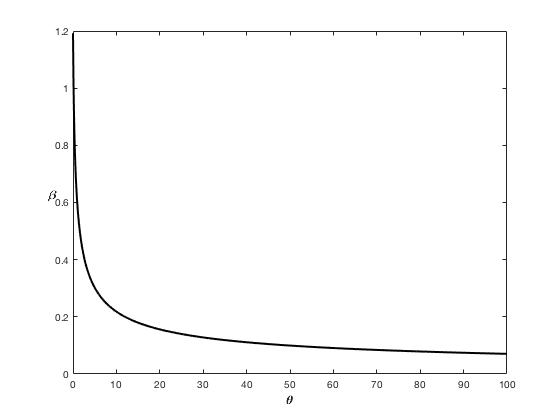}
\end{minipage}\quad
 \begin{minipage}{6cm}
\includegraphics[width=6cm,height=5cm]{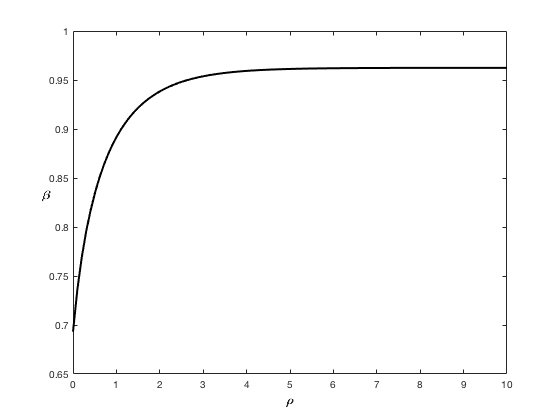}
\end{minipage}\quad
 \begin{minipage}{6cm}
\includegraphics[width=6cm,height=5cm]{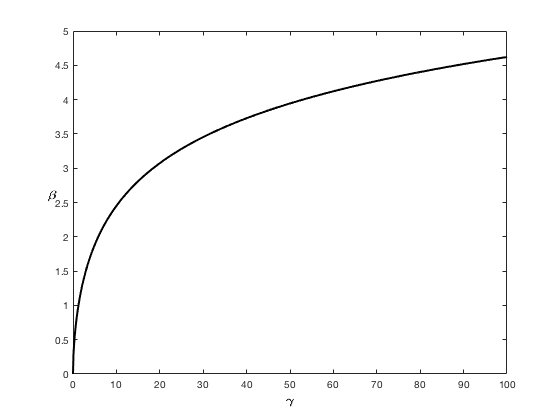}
\end{minipage}
\caption{The solution $\beta$ of \eqref{eq:infw} as a function of $\theta$ (left), $\rho$ (center), and $\gamma$ (right), with respective remaining parameters set to $1$.}

\label{fig:beta_theta_inf}

\end{figure}


\section{Proofs}\label{proofs section}


\begin{lemma}
\label{lemma:mvf}
 An admissible strategy $\bm{\xi}_i  \in \mathscr{X}_{\mathrm{\rm det}}(X_i,\mathbb{T})$ given all the competitors' strategies $\bm{\xi}_j  \in \mathscr{X}_{\mathrm{\rm det}}(X_j,\mathbb{T})$ with $j \neq i$ has the following mean-variance functional:
\begin{equation}
\mathrm{MV}_{\gamma}(\bm{\xi}_i  | \bm{\xi}_{-i}) = - X_i S^0_0 + \frac{1}{2}\bm{\xi}_i ^{\top}\Gamma^{\gamma,\theta}\bm{\xi}_i +\bm{\xi}_i ^{\top}\widetilde{\Gamma} \sum_{j\neq i}\bm{\xi}_j .
\end{equation}
\end{lemma}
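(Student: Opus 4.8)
The goal is to express the mean–variance functional $\mathrm{MV}_\gamma(\bm\xi_i\mid\bm\xi_{-i})$ as the quadratic form claimed, starting from the definition \eqref{cost functional} of $\mathscr C_{\mathbb T}$ and the price formula \eqref{price}. My first step is to split the cost into its deterministic parts and the single stochastic term. Since all strategies here are deterministic, the only randomness in $\mathscr C_{\mathbb T}(\bm\xi_i\mid\bm\xi_{-i})$ comes from the terms $-S^0_{t_k}\xi_{i,k}$ inside $-S^{\Xi}_{t_k}\xi_{i,k}$, because $S^{\Xi}_{t_k}=S^0_{t_k}-\sum_{t_\ell<t_k}G(t_k-t_\ell)\sum_{j}\xi_{j,\ell}$ and the impact part is deterministic. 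So I write $\mathscr C_{\mathbb T}(\bm\xi_i\mid\bm\xi_{-i}) = -\sum_k S^0_{t_k}\xi_{i,k} + (\text{deterministic terms})$, whence $\mathbb E[\mathscr C_{\mathbb T}] = -\sum_k \mathbb E[S^0_{t_k}]\xi_{i,k} + (\text{deterministic terms})$ and $\mathrm{Var}[\mathscr C_{\mathbb T}] = \mathrm{Var}\bigl(\sum_k S^0_{t_k}\xi_{i,k}\bigr)$.

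For the expectation, since $S^0$ is a martingale, $\mathbb E[S^0_{t_k}]=S^0_0$ for all $k$, so $-\sum_k\mathbb E[S^0_{t_k}]\xi_{i,k} = -S^0_0\sum_k\xi_{i,k} = -X_iS^0_0$ using the constraint $\bm1^\top\bm\xi_i=X_i$. For the variance term I compute $\mathrm{Var}\bigl(\sum_k S^0_{t_k}\xi_{i,k}\bigr) = \sum_{k,\ell}\xi_{i,k}\xi_{i,\ell}\,\mathrm{Cov}(S^0_{t_k},S^0_{t_\ell})$, and the martingale property gives $\mathrm{Cov}(S^0_{t_k},S^0_{t_\ell}) = \mathrm{Var}(S^0_{t_k\wedge t_\ell}) = \varphi(t_k\wedge t_\ell)$; thus $\frac\gamma2\mathrm{Var}[\mathscr C_{\mathbb T}] = \frac\gamma2\sum_{k,\ell}\xi_{i,k}\xi_{i,\ell}\varphi(t_k\wedge t_\ell)$, which is exactly the contribution of the $\gamma\varphi(t_i\wedge t_j)$ part of $\Gamma^{\gamma,\theta}$ to $\frac12\bm\xi_i^\top\Gamma^{\gamma,\theta}\bm\xi_i$.

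It remains to handle the deterministic part $\mathbb E[\mathscr C_{\mathbb T}]$ minus the $-X_iS^0_0$ already accounted for, i.e. the impact and transaction-cost terms $\sum_k\bigl[\tfrac{G(0)}2\xi_{i,k}^2 + \xi_{i,k}\sum_{t_\ell<t_k}G(t_k-t_\ell)\sum_j\xi_{j,\ell} + \tfrac{G(0)}2\sum_{j\neq i}\xi_{i,k}\xi_{j,k} + \theta\xi_{i,k}^2\bigr]$. I separate the inner sum over $j$ into $j=i$ and $j\neq i$. The $j=i$ part, $\tfrac{G(0)}2\xi_{i,k}^2 + \xi_{i,k}\sum_{t_\ell<t_k}G(t_k-t_\ell)\xi_{i,\ell} + \theta\xi_{i,k}^2$, should collapse to $\tfrac12\bm\xi_i^\top\Gamma^{\gamma,\theta}\bm\xi_i$ minus the $\gamma$-part; here I use that $\Gamma^{0,\theta}_{k\ell}=G(|t_k-t_\ell|)+2\theta\delta_{k\ell}$, and the symmetrization of the strictly-lower-triangular sum against the diagonal $G(0)$ term yields the full symmetric quadratic form — this is the bookkeeping step where the factor $\tfrac12$ and the Kronecker delta $2\theta\delta_{ij}$ must be checked carefully. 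The $j\neq i$ part, $\xi_{i,k}\sum_{t_\ell<t_k}G(t_k-t_\ell)\xi_{j,\ell} + \tfrac{G(0)}2\xi_{i,k}\xi_{j,k}$ summed over $k$ and $j\neq i$, is precisely $\bm\xi_i^\top\widetilde\Gamma\bm\xi_j$ summed over $j\neq i$, by the very definition \eqref{Gamma_tilde} of $\widetilde\Gamma$ (zero above the diagonal, $\tfrac12G(0)$ on it, $G(|t_i-t_j|)$ below). Assembling the three pieces gives the claimed formula. The only real obstacle is the careful index-tracking in matching the half-weighted diagonal terms and the strictly-lower-triangular impact sums to the symmetric matrices $\Gamma^{\gamma,\theta}$ and $\widetilde\Gamma$; everything else is the martingale computation, which is routine.
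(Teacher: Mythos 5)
Your proposal is correct and follows essentially the same route as the paper's proof: the martingale property gives the $-X_iS^0_0$ term and the covariance identity $\mathrm{Cov}(S^0_{t_p},S^0_{t_q})=\varphi(t_p\wedge t_q)$ for the variance, while the impact and transaction-cost terms are sorted into the symmetric quadratic form $\frac12\bm\xi_i^\top\Gamma^{\gamma,\theta}\bm\xi_i$ (via the symmetrization of the lower-triangular sum against the diagonal) and the cross term $\bm\xi_i^\top\widetilde\Gamma\sum_{j\neq i}\bm\xi_j$ exactly as in the paper. The bookkeeping step you flag does check out as you describe, so there is no gap.
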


\begin{proof}Since all strategies are deterministic,
\[
\begin{split}
\mathbb{E}[\mathscr{C}_{\mathbb{T}}(\bm{\xi}_i  | \bm{\xi}_{-i})] 
& = \sum^N_{k=0}\Big[\frac{G(0)}{2}\xi_{i,k}^2 - \mathbb{E} [S^{\Xi}_{t_k}\xi_{i,k}] + \frac{G(0)}2\sum_{j\neq i}\xi_{i,k}\xi_{j,k}  + \theta\xi_{i,k}^2\Big].
\end{split}
\]

Since $S^0$ is a martingale,
\[
\begin{split}
\sum^N_{k=0} \mathbb{E} [S^{\Xi}_{t_k}\xi_{i,k}] 
&  = \sum^N_{k=0}\xi_{i,k}\mathbb{E}[S^0_{t_k}] - \sum^N_{k=0}\xi_{i,k}\sum_{m=0}^{k-1} \Big(G(t_k-t_m) \sum_{j=1}^n \xi_{j,m} \Big) \\
& = X_iS^0_0 - \sum^N_{k=0}\xi_{i,k}\sum_{m=0}^{k-1} \xi_{i,m}G(t_k-t_m) \\
& \qquad - \sum^N_{k=0}\xi_{i,k}\sum_{m=0}^{k-1} G(t_k-t_m)\sum_{j\neq i} \xi_{j,m} .
\end{split}
\]
Moreover, using matrix notation,
\[
\frac{G(0) + 2\theta}{2}\sum^N_{k=0}\xi_{i,k}^2 + \sum^N_{k=0}\xi_{i,k}\sum_{m=0}^{k-1} \xi_{i,m}G(t_k-t_m)
 = \frac{1}{2} \Big[ 2\theta \sum^N_{k=0}\xi_{i,k}^2 + \sum^N_{k,m=0}\xi_{i,k}\xi_{i,m}G(|t_k-t_m|) \Big]  = \frac{1}{2}\bm{\xi}_i ^{\top}\Gamma^{0,\theta}\bm{\xi}_i ,
\]
and
\[
\sum^N_{k=0}\xi_{i,k} \Big[ \frac{G(0)}2\sum_{j\neq i}\xi_{j,k}+ \sum_{m=0}^{k-1} G(t_k-t_m)\sum_{j\neq i} \xi_{j,m}  \xi_{j,m}\Big] = \bm{\xi}_i ^{\top}\widetilde{\Gamma}\sum_{j\neq i}\bm{\xi}_j .
\]
Using again that $\bm{\xi}_i $ are deterministic and the martingale property of $S^0$,
\[
\mathrm{Var}  [\mathscr{C}_{\mathbb{T}}(\bm{\xi}_i  | \bm{\xi}_{-i})] = \mathrm{Var}   \Big[\sum^N_{k=0} S^{\Xi}_{t_k}\xi_{i,k} \Big] = \mathrm{Var}   \Big[\sum^N_{k=0} S^0_{t_k}\xi_{i,k} \Big]= \sum^N_{p,q=0}\xi_p\xi_q\text{Cov}(S^0_{t_p},S^0_{t_q}) = \sum^N_{p,q=0}\xi_p\xi_q\varphi(t_p \wedge t_q).
\]
By substituting the preceding results into \eqref{eq:mvf}, we obtain the desired formula:
\[
\begin{split}
\mathrm{MV}_{\gamma}(\bm{\xi}_i  | \bm{\xi}_{-i}) 
& = \mathbb{E}[\mathscr{C}_{\mathbb{T}}(\bm{\xi}_i  | \bm{\xi}_{-i})] + \frac{\gamma}{2}\mathrm{Var}  [\mathscr{C}_{\mathbb{T}}(\bm{\xi}_i  | \bm{\xi}_{-i})] \\
& =  - X_i S^0_0 + \frac{1}{2}\bm{\xi}_i ^{\top}\Gamma^{0,\theta}\bm{\xi}_i + \bm{\xi}_i ^{\top}\widetilde{\Gamma}\sum_{j\neq i}\bm{\xi}_j +\frac{\gamma}{2} \sum^N_{p,q=0}\xi_p\xi_q\varphi(t_p \wedge t_q) \\
& =  - X_i S^0_0 + \frac{1}{2}\bm{\xi}_i ^{\top}\Gamma^{\gamma,\theta}\bm{\xi}_i +\bm{\xi}_i ^{\top}\widetilde{\Gamma} \sum_{j\neq i}\bm{\xi}_j .
\end{split}
\]
\end{proof}


We will use the convention of saying that an $n\times n$-matrix $A$ is \emph{positive}  if $\bm x^\top A\bm x>0$ for all nonzero $\bm x\in\mathbb{R}^{n}$, which makes sense also if $A$ is not necessarily symmetric. Clearly, for a positive   matrix $A$ there is no  nonzero $\bm x\in\mathbb{R}^{n}$ for which $A\bm x=\bm0$, and so $A$ is invertible.

\begin{lemma}
\label{lemma:invertible}
For all $\gamma, \theta \geq 0$ and $n \geq 1$, the matrices  $\Gamma^{\gamma,\theta}$, $\widetilde{\Gamma}$, $\Gamma^{\gamma,\theta} - \widetilde{\Gamma}$ and $\Gamma^{\gamma,\theta} + (n-1) \widetilde{\Gamma}$ are positive.
\begin{proof}
By Lemma 3.2 in~\cite{SchiedZhangHotPotato}, the matrices $\Gamma$, $\Gamma^{0,\theta}$, $\widetilde{\Gamma}$, and $\Gamma^{0,\theta} - \widetilde{\Gamma}$ are positive. Since $C:=(\varphi(t_i \wedge t_j))_{i,j=1,\dots, N}$ is the covariance matrix of the random variables, $S_{t_1}^0,\dots.S^0_{t_N}$, it is nonnegative definite. It follows that $\Gamma^{\gamma,\theta}=\Gamma^{0,\theta}+\gamma C$ and $\Gamma^{\gamma,\theta} - \widetilde{\Gamma} = (\Gamma^{0,\theta} - \widetilde{\Gamma}) + \gamma C$ are  positive as well. Hence, $\Gamma^{\gamma,\theta} + (n-1) \widetilde{\Gamma}$ is also positive. 
\end{proof}
\end{lemma}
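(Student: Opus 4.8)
The plan is to reduce the entire statement to two facts that are already on the table: that the symmetric matrix $\Gamma^{0,0}=\big(G(|t_i-t_j|)\big)_{0\le i,j\le N}$ is positive definite, which is exactly what Assumption~\ref{pos def assu} (together with P\'olya's criterion) provides, and that $C=\big(\varphi(t_i\wedge t_j)\big)_{0\le i,j\le N}$ is nonnegative definite, being the covariance matrix of the square-integrable martingale values $S^0_{t_0},\dots,S^0_{t_N}$. The one point that needs a little care is that $\widetilde\Gamma$ is not symmetric, so I would first record that a quadratic form only sees the symmetric part of its matrix: for every $\bm x\in\mathbb R^{N+1}$,
\[
\bm x^\top\widetilde\Gamma\,\bm x=\tfrac12\,\bm x^\top\big(\widetilde\Gamma+\widetilde\Gamma^\top\big)\bm x=\tfrac12\,\bm x^\top\Gamma^{0,0}\bm x,
\]
the last step being the identity $\Gamma^{0,0}=\widetilde\Gamma+\widetilde\Gamma^\top$ observed after \eqref{Gamma_tilde}. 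Since $\Gamma^{0,0}$ is positive definite, this already shows that $\widetilde\Gamma$ is positive.

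Next I would use the decomposition $\Gamma^{\gamma,\theta}=\Gamma^{0,0}+2\theta I+\gamma C$, which is immediate from \eqref{Gamma}. Combined with the displayed identity for $\widetilde\Gamma$, this gives for every nonzero $\bm x\in\mathbb R^{N+1}$:
\[
\bm x^\top\Gamma^{\gamma,\theta}\bm x=\bm x^\top\Gamma^{0,0}\bm x+2\theta|\bm x|^2+\gamma\,\bm x^\top C\bm x,
\]
\[
\bm x^\top\big(\Gamma^{\gamma,\theta}-\widetilde\Gamma\big)\bm x=\tfrac12\,\bm x^\top\Gamma^{0,0}\bm x+2\theta|\bm x|^2+\gamma\,\bm x^\top C\bm x,
\]
\[
\bm x^\top\big(\Gamma^{\gamma,\theta}+(n-1)\widetilde\Gamma\big)\bm x=\tfrac{n+1}2\,\bm x^\top\Gamma^{0,0}\bm x+2\theta|\bm x|^2+\gamma\,\bm x^\top C\bm x.
\]
On each right-hand side the $\Gamma^{0,0}$-term is strictly positive (by positive definiteness of $\Gamma^{0,0}$ and $\bm x\neq\bm0$), while the remaining two terms are nonnegative because $\theta,\gamma\ge0$, $n\ge1$, and $C$ is nonnegative definite. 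Hence all four matrices are positive, and, as noted just before the lemma, a positive matrix is invertible.

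I do not expect a genuine obstacle here: the argument is essentially bookkeeping once the positive definiteness of $\Gamma^{0,0}$ is available. The two things I would be careful to flag are (i) that \emph{positive} is meant here in the sense of the quadratic form $\bm x^\top A\bm x>0$ and not as symmetric positive definiteness, which is precisely why one works through $\widetilde\Gamma+\widetilde\Gamma^\top$ rather than through eigenvalues of $\widetilde\Gamma$; and (ii) that strictness is carried entirely by the $\Gamma^{0,0}$-term, so that only nonnegativity of $\theta$ and $\gamma$ is needed. As an alternative to spelling out the $\gamma=0$ cases directly, one may simply cite Lemma~3.2 of~\cite{SchiedZhangHotPotato} for positivity of $\Gamma^{0,\theta}$, $\widetilde\Gamma$, and $\Gamma^{0,\theta}-\widetilde\Gamma$, and then add the nonnegative-definite perturbation $\gamma C$.
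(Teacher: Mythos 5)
Your proof is correct. It rests on the same two pillars as the paper's argument -- positive definiteness of $\Gamma^{0,0}$ (from Assumption~\ref{pos def assu}) and nonnegative definiteness of the covariance matrix $C$ -- and your quadratic-form computations (the coefficients $\tfrac12$ and $\tfrac{n+1}2$ in front of $\bm x^\top\Gamma^{0,0}\bm x$) are all right. The only substantive difference is that the paper outsources the $\gamma=0$ base cases (positivity of $\Gamma^{0,\theta}$, $\widetilde\Gamma$, and $\Gamma^{0,\theta}-\widetilde\Gamma$) to Lemma~3.2 of~\cite{SchiedZhangHotPotato} and then simply adds the nonnegative perturbation $\gamma C$, whereas you make the argument self-contained by observing that a quadratic form sees only the symmetric part of its matrix, so $\bm x^\top\widetilde\Gamma\,\bm x=\tfrac12\bm x^\top\Gamma^{0,0}\bm x$ via the identity $\Gamma^{0,0}=\widetilde\Gamma+\widetilde\Gamma^\top$ recorded after \eqref{Gamma_tilde}. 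This symmetrization step is precisely what underlies the cited lemma, so nothing new is being proved, but your version has the advantage of not depending on an external reference and of exhibiting explicitly where strict positivity comes from (entirely from the $\Gamma^{0,0}$-term, with $\theta$, $\gamma$, and $n-1$ contributing only nonnegative terms). Your closing remark that one could instead just cite Lemma~3.2 and add $\gamma C$ is exactly the paper's route.
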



\begin{lemma}
\label{lemma:uniqueNash}
For a given time grid $\mathbb{T}$ and initial values $X_1, \dots, X_n \in \mathbb{R}$, there exists at most one Nash equilibrium for mean-variance optimization.
\end{lemma}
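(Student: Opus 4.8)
The plan is to combine the explicit quadratic representation of the mean--variance functional from Lemma~\ref{lemma:mvf} with the positivity statements of Lemma~\ref{lemma:invertible}, via a standard ``energy'' (difference) argument for quadratic games.

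First I would record the first-order characterization of a best response. Fix $i$ and the competitors' deterministic strategies. By Lemma~\ref{lemma:mvf}, the map $\bm\xi\mapsto\mathrm{MV}_{\gamma}(\bm\xi\,|\,\bm\xi_{-i})$ is an inhomogeneous quadratic form on $\mathbb R^{N+1}$ whose leading term $\tfrac12\bm\xi^{\top}\Gamma^{\gamma,\theta}\bm\xi$ has, by Lemma~\ref{lemma:invertible}, a positive (hence positive definite, since symmetric) coefficient matrix. Thus it is strictly convex, and a vector $\bm\xi_i^*\in\mathscr X_{\mathrm{det}}(X_i,\mathbb T)=\{\bm\xi:\bm1^{\top}\bm\xi=X_i\}$ minimizes it over that affine subspace if and only if its gradient is orthogonal to $\{\bm1\}^{\perp}$, i.e. there is a multiplier $\lambda_i\in\mathbb R$ with
\[
\Gamma^{\gamma,\theta}\bm\xi_i^*+\widetilde\Gamma\sum_{j\neq i}\bm\xi_j^*=\lambda_i\bm1,\qquad \bm1^{\top}\bm\xi_i^*=X_i .
\]
Consequently, any Nash equilibrium $(\bm\xi_1^*,\dots,\bm\xi_n^*)$ for mean--variance optimization satisfies this coupled system for some multipliers $\lambda_1,\dots,\lambda_n$.

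Next, suppose $(\bm\xi_i^*)_{i}$ and $(\bm\eta_i^*)_{i}$ are two Nash equilibria, with multipliers $(\lambda_i)$ and $(\mu_i)$. Put $\bm d_i:=\bm\xi_i^*-\bm\eta_i^*$ and $c_i:=\lambda_i-\mu_i$. Subtracting the two first-order systems gives $\bm1^{\top}\bm d_i=0$ and $\Gamma^{\gamma,\theta}\bm d_i+\widetilde\Gamma\sum_{j\neq i}\bm d_j=c_i\bm1$ for each $i$. Taking the inner product with $\bm d_i$ kills the right-hand side because $\bm1^{\top}\bm d_i=0$, so $\bm d_i^{\top}\Gamma^{\gamma,\theta}\bm d_i+\bm d_i^{\top}\widetilde\Gamma\sum_{j\neq i}\bm d_j=0$; summing over $i$ and writing $\bm D:=\sum_{i=1}^n\bm d_i$, I would use the bilinear identity $\sum_{i,j}\bm d_i^{\top}\widetilde\Gamma\bm d_j=\bm D^{\top}\widetilde\Gamma\bm D$ to recombine the double sum, which yields
\[
\sum_{i=1}^{n}\bm d_i^{\top}\big(\Gamma^{\gamma,\theta}-\widetilde\Gamma\big)\bm d_i+\bm D^{\top}\widetilde\Gamma\bm D=0 .
\]
Both $\Gamma^{\gamma,\theta}-\widetilde\Gamma$ and $\widetilde\Gamma$ are positive by Lemma~\ref{lemma:invertible}, so every summand on the left is nonnegative; hence each vanishes, and strict positivity of $\Gamma^{\gamma,\theta}-\widetilde\Gamma$ forces $\bm d_i=\bm0$ for all $i$. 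Therefore the two equilibria coincide.

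The one point requiring care is the algebraic bookkeeping in the last display: the cross terms $\bm d_i^{\top}\widetilde\Gamma\bm d_j$ ($i\neq j$) must be recombined so that what is left is exactly $\Gamma^{\gamma,\theta}-\widetilde\Gamma$ and $\widetilde\Gamma$ — the two matrices for which Lemma~\ref{lemma:invertible} guarantees positivity — rather than some indefinite combination; the identity $\sum_{i\neq j}\bm d_i^{\top}\widetilde\Gamma\bm d_j=\bm D^{\top}\widetilde\Gamma\bm D-\sum_i\bm d_i^{\top}\widetilde\Gamma\bm d_i$ does this and uses only bilinearity, not symmetry of $\widetilde\Gamma$. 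Everything else (strict convexity of each agent's objective, the first-order condition on the affine constraint set, and the reduction to a sum of nonnegative quadratic forms) is routine.
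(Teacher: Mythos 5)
Your proof is correct and is essentially the argument the paper has in mind: the paper omits the proof and points to Lemma 3.3 of the cited Schied--Zhang paper, which is exactly this difference argument — write the first-order conditions for two equilibria, subtract, pair with $\bm d_i$ so the multiplier terms vanish via $\bm 1^{\top}\bm d_i=0$, and recombine the cross terms into $\sum_i\bm d_i^{\top}(\Gamma^{\gamma,\theta}-\widetilde\Gamma)\bm d_i+\bm D^{\top}\widetilde\Gamma\bm D=0$, which forces $\bm d_i=\bm 0$ by the positivity of both matrices from Lemma~\ref{lemma:invertible}. No gaps.
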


\begin{proof}The proof is similar to the one  of Lemma 3.3 in~\cite{SchiedZhangHotPotato} or Lemma 4 in~\cite{SchiedZhangCARA} and hence omitted.
\end{proof}


Now we are ready to prove Theorem~\ref{maintheorem}.

\begin{proof}[Proof of Theorem~\ref{maintheorem}] By  Definition~\ref{def:nash} and Lemma~\ref{lemma:mvf}, we have the following  linear-quadratic optimization problem: for all $i = 1, \dots, n$,
\[
\mathrm{MV}_{\gamma}(\bm{\xi}_i^* | \bm{\xi}_{-i}^*) = - X_i S^0_0+\min_{\bm{\xi}_i  \in \mathscr{X}_{\mathrm{det}}(X_i, \mathbb{T})} \Big(   \frac{1}{2}\bm{\xi}_i ^{\top}\Gamma^{\gamma,\theta}\bm{\xi}_i  + \bm{\xi}_i ^{\top}\widetilde{\Gamma} \sum_{j\neq i}\bm{\xi}_j\Big) .
\]
The constraint $\bm{\xi}_i  \in \mathscr{X}_{\mathrm{det}}(X_i, \mathbb{T})$ can be re-written as the linear equality constraint $
\bm{1}^{\top}\bm{\xi}_i  = X_i$.

To solve this problem, we use the Lagrange multiplier theorem  \cite[pp.~276-283]{Bertsekas} to obtain $\alpha_i \in \mathbb{R}$ for $i = 1, \dots, n$ such that the optimal strategies satisfy the following necessary conditions:
\begin{equation}
\label{pf:optimalityconditions}
\left\{
\begin{array}{ll}
      \Gamma^{\gamma,\theta}\bm{\xi}_i^* + \widetilde{\Gamma} \sum_{j\neq i}\bm{\xi}_j^* = \alpha_i \bm{1}, \\
      \\
      \bm{1}^{\top}\bm{\xi}_i^* = X_i. 
\end{array} 
\right.
\end{equation}
We will show below that these equations are also sufficient for our optimization problem. 
Summing over $i$ in the first line of \eqref{pf:optimalityconditions} yields
\[
[\Gamma^{\gamma,\theta} + (n-1) \widetilde{\Gamma}]\sum^{n}_{j=1}\bm{\xi}_j^* = \sum^{n}_{j=1}\alpha_j \bm{1}.
\]
By Lemma~\ref{lemma:invertible}, $\Gamma^{\gamma,\theta}+ (n-1) \widetilde{\Gamma}$ is an invertible matrix. Thus,
\begin{equation}
\label{pf:maineq1}
\begin{split}
\sum^{n}_{j=1}\bm{\xi}_j^* 
& = \sum^{n}_{j=1}\alpha_j [\Gamma^{\gamma,\theta}+ (n-1) \widetilde{\Gamma}]^{-1}\bm{1} \\
& = \frac{\bm{1}^{\top} \sum^{n}_{j=1}\alpha_j [\Gamma^{\gamma,\theta}+ (n-1) \widetilde{\Gamma}]^{-1}\bm{1}}{\bm{1}^{\top}[\Gamma^{\gamma,\theta}+ (n-1) \widetilde{\Gamma}]^{-1}\bm{1}} [\Gamma^{\gamma,\theta}+ (n-1) \widetilde{\Gamma}]^{-1}\bm{1} \\
& =  \sum^{n}_{j=1}\frac{\bm{1}^{\top}\bm{\xi}_j^*}{\bm{1}^{\top}[\Gamma^{\gamma,\theta}+ (n-1) \widetilde{\Gamma}]^{-1}\bm{1}} [\Gamma^{\gamma,\theta}+ (n-1) \widetilde{\Gamma}]^{-1}\bm{1}\\
&= \sum^{n}_{j=1} X_j \bm{v},
\end{split}
\end{equation}
where we have used  the second condition from \eqref{pf:optimalityconditions} in the final step. 

Now consider the first conditions in \eqref{pf:optimalityconditions}. Pick the $i^{th}$ equation, multiply by $(n-1)$, and then subtract the other $(n-1)$ equations from it. We get
\[
\Gamma^{\gamma,\theta} \Big[ (n-1)\bm{\xi}_i^* - \sum_{j\neq i}\bm{\xi}_j^* \Big]  - \widetilde{\Gamma}  \Big[ (n-1)\bm{\xi}_i^* - \sum_{j\neq i}\bm{\xi}_j^* \Big]  = \Big[ (n-1)\alpha_i - \sum_{j\neq i}\alpha_j \Big] \bm{1}.
\]
Further simplifications show that
\[
(\Gamma^{\gamma,\theta} - \widetilde{\Gamma})\Big[ n\bm{\xi}_i^* - \sum^{n}_{j=1}\bm{\xi}_j^* \Big] = \Big[ n\alpha_i - \sum^{n}_{j=1}\alpha_j \Big] \bm{1}.
\]
The matrix $\Gamma^{\gamma,\theta} - \widetilde{\Gamma}$ is invertible by Lemma~\ref{lemma:invertible}. If follows that
\begin{equation}
\label{pf:maineq2}
n\bm{\xi}_i^* - \sum^{n}_{j=1}\bm{\xi}_j^* = \Big[ nX_i - \sum^{n}_{j=1}X_j \Big] \bm{w}.
\end{equation}
Using \eqref{pf:maineq1} and \eqref{pf:maineq2} now gives
\[
\bm{\xi}_i^* = \bar{X} \bm{v}+ (X_i - \bar{X}) \bm{w}
\]
where $\bar{X} = \frac{1}{n}\sum_{j=1}^n X_j$.

Now we show that the equations \eqref{pf:optimalityconditions} are sufficient for the minimization of our mean-variance functional. To this end, we rewrite the objective mean-variance functional as follows:
\[
\frac{1}{2}\bm{\xi}_i ^{\top}\Gamma^{\gamma,\theta}\bm{\xi}_i  + \bm{\xi}_i ^{\top}\widetilde{\Gamma} \sum_{j\neq i}\bm{\xi}_j^* = \frac{1}{2}\bm{\xi}_i ^{\top}\Gamma^{\gamma,\theta}\bm{\xi}_i  + \bm{g_i}^{\top}\bm{\xi}_i ,
\]
where $\bm{g_i} = \widetilde{\Gamma}\sum_{j\neq i}\bm{\xi}_j^*$.
Next, for $i = 1, \dots, n$, we consider  arbitrary $\bm{\eta_i} \in \mathscr X_{\mathrm{det}}(X_i,\mathbb T)$. Then, by \eqref{pf:optimalityconditions},
\begin{equation}\label{optimality check eq}
\begin{split}
\frac{1}{2}\bm{\eta_i}^{\top}\Gamma^{\gamma,\theta}\bm{\eta_i} + \bm{g_i}^{\top}\bm{\eta_i} - \Big[ \frac{1}{2}{\bm{\xi}_i^*}^{\top}\Gamma^{\gamma,\theta}{\bm{\xi}_i^*} + \bm{g_i}^{\top}{\bm{\xi}_i^*} \Big]
& = \frac{1}{2}(\bm{\eta_i} + {\bm{\xi}_i^*})^{\top}\Gamma^{\gamma,\theta}(\bm{\eta_i} - {\bm{\xi}_i^*}) + \bm{g_i}^{\top}(\bm{\eta_i} - {\bm{\xi}_i^*}) \\
& = \Big[\frac{1}{2} (\Gamma^{\gamma,\theta})^{\top}(\bm{\eta_i} + {\bm{\xi}_i^*}) +  \bm{g_i}\Big]^{\top}(\bm{\eta_i} - {\bm{\xi}_i^*}) \\
& = \Big[(\Gamma^{\gamma,\theta}{\bm{\xi}_i^*} +  \bm{g_i}) + \frac{1}{2}(\Gamma^{\gamma,\theta})^\top(\bm{\eta_i} - {\bm{\xi}_i^*}) \Big]^{\top}(\bm{\eta_i} - {\bm{\xi}_i^*}) \\
& = \Big[\alpha_i \bm{1} + \frac{1}{2}(\Gamma^{\gamma,\theta})^\top(\bm{\eta_i} - {\bm{\xi}_i^*}) \Big]^{\top}(\bm{\eta_i} - {\bm{\xi}_i^*}) \\
& = \alpha_i \bm{1}^{\top}(\bm{\eta_i} - {\bm{\xi}_i^*}) + \frac{1}{2}(\bm{\eta_i} - {\bm{\xi}_i^*}) ^{\top}\Gamma^{\gamma,\theta}(\bm{\eta_i} - {\bm{\xi}_i^*}) \\
& \geq 0,
\end{split}
\end{equation}
with equality if and only if $\bm{\eta_i} = {\bm{\xi}_i^*}$. Altogether, we obtain that  \eqref{opt_strat} defines the unique Nash equilibrium in $\mathscr{X}_{\text{det}}(X_1, \mathbb{T}) \times \dots \times \mathscr{X}_{\text{det}}(X_n, \mathbb{T})$.

Now we turn to CARA utility maximization.  We first note that the cost functional $\mathscr C_{\mathbb T}(\bm\xi_i|\bm\xi_{-i})$ is a Gaussian random variable if $S^0$ is a Bachelier model and the strategies $\bm\xi_1,\dots\bm\xi_n$ are deterministic. Therefore, for $\gamma>0$,
$$U_\gamma(\bm\xi_i|\bm\xi_{-i})=\frac1\gamma\Big(1-\exp\Big(\gamma\mathbb E[\mathscr C_{\mathbb T}(\bm\xi_i|\bm\xi_{-i})]+\frac{\gamma^2}2\text{Var}[\mathscr C_{\mathbb T}(\bm\xi_i|\bm\xi_{-i})]\Big)\Big)=u_\gamma\big(-\mathrm{MV}_\gamma(\bm\xi_i|\bm\xi_{-i})\big).
$$
For $\gamma=0$, we clearly have 
$$U_0(\bm\xi_i|\bm\xi_{-i})=-\mathbb E[\mathscr C_{\mathbb T}(\bm\xi_i|\bm\xi_{-i})]=-\mathrm{MV}_0(\bm\xi_i|\bm\xi_{-i}).
$$
Therefore, mean-variance optimization and CARA utility maximization are equivalent when performed over the class of \emph{deterministic} strategies. Next, suppose that the strategies $\bm\xi_{-i}$ are deterministic. Then it follows as in Theorem 2.1 of~\cite{SchiedSchoenebornTehranchi} that the maximizer of the functional $U_\gamma(\bm\xi|\bm\xi_{-i})$ over the class of all \emph{adapted} strategies $\bm\xi\in\mathscr X(X_i,\mathbb T)$ is deterministic. That is, the maximization of $U_\gamma(\bm\xi|\bm\xi_{-i})$  over $\mathscr X(X_i,\mathbb T)$ is equivalent to the maximization over $\mathscr X_{\mathrm{det}}(X_i,\mathbb T)$. Therefore, it now follows as in the proof of Corollary 2.1 of~\cite{SchiedZhangCARA} that the strategies \eqref{opt_strat} form a Nash equilibrium for CARA utility maximization.\end{proof}

\begin{proof}[Proof of Proposition~\ref{prop:w_rho}] Let us define a vector $\bm\omega$ by $\omega_i=1$ for $i=0,\dots, N-1$ and $\omega_N=1/(1-e^{-\rho/N})$. Then the assertion will follow if
\begin{equation}
\label{pf:wrho:eq1}
(\Gamma^{0,\theta}-\widetilde{\Gamma}) \bm{\omega} = c \bm{1}
\end{equation}
with $c = \frac{1}{1-e^{-\frac{\rho}{N}}}$. To this end, we note that, with $\delta_{ij}$ denoting the Kronecker delta, 
\[
\begin{split}
\Big( (\Gamma^{0,\theta}-\widetilde{\Gamma})\bm{\omega}\Big)_i
& = \Big( (\widetilde{\Gamma}^{\top}+2\theta \text{Id})\bm{\omega}\Big)_i \\
& = (\widetilde{\Gamma}^{\top}\bm{\omega})_i + (2\theta \text{Id}\bm{\omega})_i \\
& = \frac{G(0)}{2}\omega_i + \sum^{N-1}_{j = i+1}G\Big(\frac{j}{N} - \frac{i}{N}\Big)\omega_j + G\Big(1-\frac{i}{N}\Big)\omega_N + 2\theta \sum^N_{j = 0}\delta_{ij}\omega_j \\
& = \Big(\frac{1}{2} + 2\theta\Big)\omega_i + \sum^{N-1}_{j = i+1}G\Big(\frac{j}{N} - \frac{i}{N}\Big)\omega_j + G\Big(1-\frac{i}{N}\Big)\omega_N.
\end{split}
\]
Since $\theta = \frac{1}{4}$, we have
\[
\begin{split}
\Big( (\Gamma^{0,\theta}-\widetilde{\Gamma})\bm{\omega}\Big)_i 
& = \sum^{N-1}_{j = i}G\Big(\frac{j}{N} - \frac{i}{N}\Big)\omega_j + G\Big(1-\frac{i}{N}\Big)\omega_N \\
& = e^{ \rho \frac{i}{N}}\sum^{N-1}_{j = i} e^{- \rho \frac{j}{N}} + \frac{e^{- \rho(1-\frac{i}{N})}}{1-e^{-\rho/N}} \\
&=\frac{1}{1-e^{-\rho/N}}.\end{split}
\]
This proves \eqref{pf:wrho:eq1} and hence the assertion.
\end{proof}

Now we prepare for the proof of Theorem~\ref{thm:infv}.

\begin{lemma}\label{alpha lemma}
For $\gamma,\sigma,\rho>0$, the following equation has a unique positive solution $\alpha$,
\begin{equation}
\label{eq:infv2}
\frac{1}{e^{(\alpha + \rho)}-1}-\frac{n}{e^{(\alpha - \rho)}-1} - \frac{\gamma\sigma^2 e^{-\alpha}}{(1-e^{-\alpha})^2} = 0.
\end{equation}
Moreover, $\alpha\in(0,\rho)$.
\end{lemma}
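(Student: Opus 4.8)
The plan is to set $f(\alpha)$ equal to the left‑hand side of \eqref{eq:infv2}, viewed as a function on $(0,\infty)\setminus\{\rho\}$, and to prove that $f$ is strictly increasing on $(0,\rho)$ with $f(0^+)=-\infty$ and $f(\rho^-)=+\infty$, while $f<0$ on $(\rho,\infty)$; the lemma then follows immediately. The technical backbone will be the auxiliary function $\psi(t):=e^t/(e^t-1)^2$, $t\neq0$, which is manifestly positive, satisfies $\psi(-t)=\psi(t)$, blows up like $t^{-2}$ as $t\to0$, and is strictly decreasing on $(0,\infty)$ (a one‑line differentiation gives $\psi'(t)=-e^t(e^t+1)/(e^t-1)^3$, which is negative for $t>0$). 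The identity that makes everything align is $\frac{d}{ds}\frac1{e^s-1}=-\psi(s)$, together with the observation that the last term of $f$ is exactly $-\gamma\sigma^2\psi(\alpha)$, since $e^{-\alpha}/(1-e^{-\alpha})^2=e^{\alpha}/(e^\alpha-1)^2$.

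First I would record the boundary behaviour on $(0,\rho)$: as $\alpha\downarrow0$ the first two terms of $f$ remain bounded while $-\gamma\sigma^2\psi(\alpha)\to-\infty$, so $f(0^+)=-\infty$; as $\alpha\uparrow\rho$ the term $-n/(e^{\alpha-\rho}-1)$ tends to $+\infty$ (its denominator tends to $0$ from below) while the other two terms stay bounded, so $f(\rho^-)=+\infty$. Then I would differentiate to obtain $f'(\alpha)=-\psi(\alpha+\rho)+n\,\psi(\alpha-\rho)-\gamma\sigma^2\psi'(\alpha)$. For $\alpha\in(0,\rho)$ the last summand is positive because $\psi'<0$ on $(0,\infty)$; for the first two I would invoke the evenness and monotonicity of $\psi$: $n\,\psi(\alpha-\rho)=n\,\psi(\rho-\alpha)\ge\psi(\rho-\alpha)>\psi(\rho+\alpha)=\psi(\alpha+\rho)$, where the first inequality uses $n\ge1$ and the strict one uses $0<\rho-\alpha<\rho+\alpha$ with $\psi$ strictly decreasing on $(0,\infty)$. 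Hence $f'>0$ on $(0,\rho)$, and combined with the boundary values the intermediate value theorem yields exactly one zero $\alpha\in(0,\rho)$.

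It then remains to exclude zeros in $(\rho,\infty)$. There both $e^{\alpha-\rho}-1$ and $e^{\alpha+\rho}-1$ are positive, and since $\alpha-\rho<\alpha+\rho$ we get $0<e^{\alpha-\rho}-1<e^{\alpha+\rho}-1$, so $n/(e^{\alpha-\rho}-1)\ge 1/(e^{\alpha-\rho}-1)>1/(e^{\alpha+\rho}-1)$ (again using $n\ge1$). Thus the first two terms of $f$ sum to something negative, the third term $-\gamma\sigma^2\psi(\alpha)$ is negative, and therefore $f<0$ on $(\rho,\infty)$. Consequently $f$ has a unique positive zero, and it lies in $(0,\rho)$, which is the assertion.

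The only genuinely delicate point is the sign of $f'$ on $(0,\rho)$: a priori the strictly decreasing contribution $1/(e^{\alpha+\rho}-1)$ could threaten to dominate, and what rescues the argument is precisely that the $n/(e^{\alpha-\rho}-1)$ term, rewritten via the even function $\psi$, gets compared at the argument $\rho-\alpha$, which is strictly smaller than $\rho+\alpha$, so that $\psi$ decreasing on the positive axis (together with $n\ge1$) forces the desired inequality. Everything else — the two limits and the estimate on $(\rho,\infty)$ — is routine.
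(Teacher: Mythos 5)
Your proof is correct and follows essentially the same route as the paper's: establish $f(0^+)=-\infty$ and $f(\rho^-)=+\infty$, show $f$ is strictly increasing on $(0,\rho)$, and rule out zeros on $(\rho,\infty)$ by a sign argument. The only difference is cosmetic — the paper rewrites the equation in hyperbolic functions before differentiating, whereas your use of the even, strictly decreasing function $\psi(t)=e^t/(e^t-1)^2$ gives a cleaner and more explicitly justified verification of $f'>0$ on $(0,\rho)$, a step the paper's proof asserts with less detail.
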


\begin{proof}By rearranging  equation  \eqref{eq:infv2} we get
\begin{align}
0&=\frac{-(e^{\rho}-e^{-\rho})+(n-1)(e^{-\alpha}-e^{\rho})}{(e^{\alpha}+e^{-\alpha})-(e^{\rho}+e^{-\rho})} +\frac{\gamma\sigma^2}{2-(e^{\alpha}+e^{-\alpha})}\nonumber\\
&=\frac{\gamma\sigma^2}{2-2\cosh(\alpha)}+\frac{-2\sinh(\rho)+(n-1)(\cosh(\alpha)-\sinh(\alpha)-(\cosh(\rho)+\sinh(\rho)))}{2\cosh(\alpha)-2\cosh(\rho)}\nonumber\\
&=\frac{\gamma\sigma^2}{2-2\cosh(\alpha)}-\frac{\sinh(\rho)}{\cosh(\alpha)-\cosh(\rho)}+\Big(\frac{n-1}{2}\Big)\Big[1-\frac{\sinh(\alpha)+\sinh(\rho)}{\cosh(\alpha)-\cosh(\rho)}\Big] =:f(\alpha)\label{pf:infv:eq2}
\end{align}
Clearly, when $\alpha > \rho>0$, then $f(\alpha)<0$. Therefore, if a zero of $f$ exists, it must be within $(0,\rho)$.
One  easily sees that
\[
\lim_{\alpha \downarrow 0}f(\alpha) = -\infty \qquad \text{and} \qquad \lim_{\alpha \uparrow \rho}f(\alpha) = +\infty.
\] 
Hence, $f$ admits at least one zero in $(0,\rho)$. Moreover, 
\[
\begin{split}
\frac{df(\alpha)}{d\alpha} 
 = & \frac{2 \gamma \sigma^2 \sinh(\alpha)}{(2-2\cosh(\alpha))^2} + \frac{\sinh(\alpha)\sinh(\rho)}{(\cosh(\alpha)-\cosh(\rho))^2} \\
& + \Big(\frac{n-1}{2}\Big) \Big[\frac{\sinh(\alpha)(\sinh(\alpha)+\sinh(\rho))}{(\cosh(\alpha)-\cosh(\rho))^2} - \frac{\cosh(\alpha)}{\cosh(\alpha) - \cosh(\rho)} \Big] > 0,
\end{split}
\]
and so the zero must be unique.
\end{proof}

\begin{lemma}\label{beta lemma}Suppose that $\gamma,\sigma,\rho>0$ and $\theta\ge0$. Then the following equation has a unique positive solution $\beta$,
\begin{equation}\label{eq:infw2} 
2\theta + \frac{1}{2} + \frac{1}{e^{(\beta+\rho)}-1}-\frac{\gamma\sigma^2 e^{-\beta}}{(1-e^{-\beta})^2} = 0.
\end{equation} 
\end{lemma}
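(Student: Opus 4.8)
The plan is to follow the template of the proof of Lemma~\ref{alpha lemma}, with one modification: the function whose zero we want will \emph{not} be globally monotone, so instead of showing its derivative has a fixed sign we will show it is strictly decreasing \emph{at each of its zeros}, which still forces uniqueness.

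First I would substitute $x=e^{-\beta}\in(0,1)$. Using $e^{\beta+\rho}-1=(e^\rho-x)/x$ and $1-e^{-\beta}=1-x$, the left-hand side of \eqref{eq:infw2} becomes
\[
h(x):=\Big(2\theta+\tfrac12\Big)+\frac{x}{e^\rho-x}-\frac{\gamma\sigma^2 x}{(1-x)^2},
\]
and since $\beta\mapsto e^{-\beta}$ is a decreasing bijection from $(0,\infty)$ onto $(0,1)$, it suffices to prove that $h$ has a unique zero in $(0,1)$. Note $e^\rho-x\ge e^\rho-1>0$ on $(0,1)$, so $h$ is continuous there. For existence: $\lim_{x\downarrow0}h(x)=2\theta+\tfrac12>0$, whereas $\lim_{x\uparrow1}h(x)=-\infty$ because $x/(1-x)^2\to\infty$ while the remaining terms stay bounded; the intermediate value theorem yields a zero $x_0\in(0,1)$.

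For uniqueness I would compute $h'(x)=\dfrac{e^\rho}{(e^\rho-x)^2}-\dfrac{\gamma\sigma^2(1+x)}{(1-x)^3}$ and show that $h'(x_0)<0$ at every zero $x_0$ of $h$. From $h(x_0)=0$ one gets $\dfrac{\gamma\sigma^2}{(1-x_0)^2}=\dfrac{2\theta+1/2}{x_0}+\dfrac1{e^\rho-x_0}$; inserting this into the expression for $h'(x_0)$ and grouping the two terms involving $e^\rho-x_0$ gives
\[
h'(x_0)=\frac1{e^\rho-x_0}\cdot\frac{e^\rho(1-x_0)-(1+x_0)(e^\rho-x_0)}{(e^\rho-x_0)(1-x_0)}-\Big(2\theta+\tfrac12\Big)\frac{1+x_0}{x_0(1-x_0)}.
\]
The numerator of the first fraction simplifies to $-x_0\bigl(2e^\rho-1-x_0\bigr)$, which is strictly negative because $e^\rho>1$ and $x_0<1$; the last term is strictly negative as well, so $h'(x_0)<0$. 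Finally, a continuous function on $(0,1)$ with $h(0^+)>0$, $h(1^-)<0$, and $h'<0$ at each of its zeros has exactly one zero: if $x_0<x_1$ were two zeros, then $h$ would be negative immediately to the right of $x_0$ and would have to return to $0$ at $x_1$, forcing $h'(x_1)\ge0$, a contradiction. Hence $\beta=-\log x_0$ is the unique positive root of \eqref{eq:infw2}.

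The main obstacle is the uniqueness step: recognizing that, unlike in Lemma~\ref{alpha lemma} where the zero is confined to $(0,\rho)$ and the derivative is globally positive there, here $h'$ genuinely changes sign on $(0,1)$, so one must use the local criterion and carry out the substitution of the zero relation into $h'$ that collapses everything to the elementary inequality $2e^\rho-1-x_0>0$.
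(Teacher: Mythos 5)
Your proof is correct, and your uniqueness argument is genuinely different from the paper's. The paper works directly with $g(\beta)$ equal to the left-hand side of \eqref{eq:infw2}, establishes the same boundary behavior ($g\to-\infty$ at $0$, $g\to 2\theta+\tfrac12$ at $\infty$), and then performs a global shape analysis of $g'$: it shows that the ratio $\frac{(e^{\beta+\rho}-1)^2(e^{-\beta}+1)}{e^{2\beta+\rho}(1-e^{-\beta})^3}$ is strictly decreasing with limit $e^{\rho}$, leading to a case split on whether $\gamma\sigma^2\ge e^{-\rho}$ (so $g$ is globally increasing) or $\gamma\sigma^2<e^{-\rho}$ (so $g$ increases to a unique maximum $\beta^*$ and then decreases toward $2\theta+\tfrac12>0$, whence no zero can lie in $(\beta^*,\infty)$). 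Your route instead substitutes $x=e^{-\beta}$ to make everything rational and then uses the local criterion that $h'<0$ at every zero of $h$, obtained by feeding the zero relation back into $h'$ and reducing to the elementary inequality $2e^{\rho}-1-x_0>0$; I checked the algebra ($e^{\rho}(1-x_0)-(1+x_0)(e^{\rho}-x_0)=-x_0(2e^{\rho}-1-x_0)$) and it is right. What each buys: the paper's argument yields the full monotonicity profile of $g$ (useful if one wants to locate the root relative to $\beta^*$), while yours is shorter, avoids the case distinction and the auxiliary limit computation entirely, and is more robust since it never needs $h'$ to have a single sign change. One small point of rigor in your final step: when you take two zeros $x_0<x_1$, choose $x_1$ to be the smallest zero exceeding $x_0$ (the zero set is closed and $h<0$ just to the right of $x_0$, so this infimum is attained and exceeds $x_0$); then $h<0$ on $(x_0,x_1)$ forces $h'(x_1)\ge0$, contradicting $h'(x_1)<0$.
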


\begin{proof}Let 
\[
g(\beta) = 2\theta + \frac{1}{2} + \frac{1}{e^{(\beta+\rho)}-1}-\frac{\gamma\sigma^2 e^{-\beta}}{(1-e^{-\beta})^2}.
\]
Clearly,
\begin{equation}
\label{pf:infw:eq2}
\lim_{\beta \downarrow 0} g(\beta) = -\infty \qquad \text{and} \qquad \lim_{\beta \uparrow \infty} g(\beta) = 2\theta + \frac{1}{2} > 0.
\end{equation}
Hence, there exists at least one zero in $(0,\infty)$. Next, we look at
\[
\begin{split}
\frac{dg(\beta)}{d\beta} 
& = -\frac{e^{(\beta+\rho)}}{(e^{(\beta+\rho)}-1)^2} + \frac{\gamma\sigma^2 e^{-\beta}}{(1-e^{-\beta})^2} + \frac{2\gamma\sigma^2 e^{-2\beta}}{(1-e^{-\beta})^3} \\
& = \gamma\sigma^2 e^{(\beta+\rho)} (e^{(\beta+\rho)}-1)^{-2} \bigg[\frac{(e^{(\beta+\rho)}-1)^2 (e^{-\beta}+1)}{e^{(2\beta+\rho)} (1-e^{-\beta})^3} - \frac{1}{\gamma\sigma^2}\bigg].
\end{split}
\]
Note that $\frac{dg(\beta)}{d\beta} > 0$ if
\[
\frac{(e^{(\beta+\rho)}-1)^2 (e^{-\beta}+1)}{e^{(2\beta+\rho)} (1-e^{-\beta})^3} - \frac{1}{\gamma\sigma^2} >0.
\]
Here, $\frac{(e^{(\beta+\rho)}-1)^2 (e^{-\beta}+1)}{e^{(2\beta+\rho)} (1-e^{-\beta})^3}$ is strictly decreasing and bounded below by $e^{\rho}$ because for all $\beta > 0$,
\[
\frac{d}{d\beta} \bigg(\frac{(e^{(\beta+\rho)}-1)^2 (e^{-\beta}+1)}{e^{(2\beta+\rho)} (1-e^{-\beta})^3} \bigg) = - \frac{2e^{(\beta - \rho)}(e^{(\beta+\rho)}-1)(2e^{(\beta+\rho)}-e^{\beta}+e^{\rho}-2)}{(e^{\beta}-1)^4} < 0,
\]
and because by L'H\^{o}pital's Rule,
\[
\begin{split}
\lim_{\beta \uparrow\infty} \bigg(\frac{(e^{(\beta+\rho)}-1)^2 (e^{-\beta}+1)}{e^{(2\beta+\rho)} (1-e^{-\beta})^3} \bigg) 
& = \lim_{\beta \uparrow\infty} \frac{(e^{(\beta+\rho)}-1)^2}{e^{(2\beta+\rho)}}  = \lim_{\beta \uparrow\infty} \frac{2e^{(\beta+\rho)}(e^{(\beta+\rho)}-1)}{2e^{(2\beta+\rho)}} \\
& = \lim_{\beta \uparrow\infty} e^{\rho} - e^{-\beta}  = e^{\rho}.
\end{split}
\]
Therefore, we have two cases to consider:
\begin{enumerate}
\item when $\gamma\sigma^2 \geq e^{-\rho}$, we have that $\frac{dg(\beta)}{d\beta} > 0$ for all $\beta > 0$. It follows that there exists a unique $\beta > 0$ such that $g(\beta) = 0$;
\item when $0 < \gamma\sigma^2 < e^{-\rho}$, we can always find a unique $\beta^*$ such that $\frac{dg(\beta)}{d\beta} > 0$ for $\beta < \beta^*$ and $\frac{dg(\beta)}{d\beta} < 0$ for $\beta > \beta^*$. In other words, $(\beta^*, g(\beta^*))$ is the global maximum of $g(\beta)$ on $(0, \infty)$. Now suppose $g(\beta^*) \leq 0$. We know that $g(\beta)$ is strictly decreasing on $(\beta^*, \infty)$, then for all $\beta \in (\beta^*, \infty)$, $g(\beta) < g(\beta^*) \leq 0$, which contradicts \eqref{pf:infw:eq2}. Therefore, $g(\beta^*) > 0$, and $0 < 2\theta + \frac{1}{2} < g(\beta) < g(\beta^*)$ for all $\beta \in (\beta^*, \infty)$, which implies a zero cannot exist in $(\beta^*, \infty)$. Since $g(\beta)$ is strictly increasing on $(0, \beta^*)$, it follows that there exists a unique $\beta \in (0, \beta^*)$ such that $g(\beta) = 0$. 
\end{enumerate}
In either case, there exists a unique positive solution $\beta$ that solves \eqref{eq:infw2}.
\end{proof}

\begin{proof}[Proof of Theorem~\ref{thm:infv}] Let us first extend the matrices \eqref{Gamma} and \eqref{Gamma_tilde} to our infinite time horizon by letting 
\begin{align*}
\Gamma_{ij}^{\gamma,\theta} &= e^{-\rho|i - j|}+ \gamma \sigma^2(i \wedge j)+2\theta\delta_{ij}, \qquad i,j=0,1,\dots,
\end{align*}
and
\begin{equation*}
\widetilde{\Gamma}_{ij} = 
\left\{
\begin{array}{ll}
      0 & \text{if } i < j, \\
      \frac12\Gamma_{ii}^{0,0} & \text{if } i = j, \\
     \Gamma_{ij}^{0,0}& \text{if } i > j. \\
\end{array} 
\right.
\end{equation*}

Next, let  $\alpha$ be as provided by Lemma~\ref{alpha lemma} and define a vector $\bm{\nu}\in\ell^1$ by
$$\nu_0 = \frac{1}{1-e^{(\alpha - \rho)}},\qquad\text{and, for $i=1,2,\dots$,}\qquad \nu_i=e^{-\alpha i}.
$$
We will now show that 
\begin{equation}
\label{pf:infv:eq1}
[\Gamma^{\gamma,\theta} + (n-1) \widetilde{\Gamma}]\bm{\nu} = c\bm{1}\qquad\text{for}\qquad c = \frac{\gamma\sigma^2 e^{-\alpha}}{(1-e^{-\alpha})^2} > 0,
\end{equation}
where $\bm 1\in\ell^\infty$ is the sequence $(1,1,\dots)$. 
Using our assumption  $\theta = \theta^* = \frac{n-1}{4}$, we get that
\begin{align*}
\lefteqn{\Big([\Gamma^{\gamma,\theta} + (n-1) \widetilde{\Gamma}]\bm{\nu}\Big)_i }\\
& = {\Gamma}^{\gamma,0}_{i0}\nu_0 + \sum^{\infty}_{j=1}{\Gamma}^{\gamma,0}_{ij}\nu_j + [2 \theta \delta_{i0} + (n-1) \widetilde{\Gamma}_{i0}]\nu_0 + \sum^{\infty}_{j=1}[2 \theta \delta_{ij} + (n-1) \widetilde{\Gamma}_{ij}]\nu_j\\
& = e^{-\rho i}\nu_0 + \sum^{\infty}_{j=1}[e^{-\rho|i-j|}+ \gamma\sigma^2(i \wedge j)]e^{-\alpha j}  + (n-1)e^{-\rho i}\nu_0 + (n-1)\sum^{i}_{j=1}e^{-\rho(i-j)}e^{-\alpha j} \\
& = ne^{-\rho i}\nu_0 + \sum^{\infty}_{j=1}[e^{-\rho|i-j|}+ \gamma\sigma^2(i \wedge j)]e^{-\alpha j} + (n-1)e^{-\rho i}\sum^{i}_{j=1} e^{-(\alpha-\rho)j}.
\end{align*}
Expanding the center term gives,
\[
\begin{split}
\sum^{\infty}_{j=1}[e^{-\rho|i-j|}+ \gamma\sigma^2(i \wedge j)]e^{-\alpha j}
& = \sum^{i}_{j=1}[e^{-\rho(i-j)}+ \gamma\sigma^2 j]e^{-\alpha j} + \sum^{\infty}_{j=i+1}[e^{-\rho(j-i)}+ \gamma\sigma^2 i]e^{-\alpha j} \\
& = e^{-\rho i}\sum^{i}_{j=1} e^{-(\alpha-\rho)j} + \gamma\sigma^2 \sum^{i}_{j=1}j e^{-\alpha j} + e^{\rho i} \sum^{\infty}_{j=i+1} e^{-(\alpha + \rho)j} + \gamma\sigma^2 i \sum^{\infty}_{j=i+1} e^{-\alpha j}.
\end{split}
\]
Thus,
\begin{align*}
\lefteqn{\Big([\Gamma^{\gamma,\theta} + (n-1) \widetilde{\Gamma}]\bm{\nu}\Big)_i}\\ &= ne^{-\rho i}\nu_0 + n e^{-\rho i}\sum^{i}_{j=1} e^{-(\alpha-\rho)j} + e^{\rho i} \sum^{\infty}_{j=i+1} e^{-(\alpha + \rho)j} + \gamma\sigma^2 \sum^{i}_{j=1}j e^{-\alpha j} + \gamma\sigma^2 i \sum^{\infty}_{j=i+1} e^{-\alpha j}\\
& = ne^{-\rho i}\nu_0+ n \Big[\frac{e^{-\rho i} - e^{-\alpha i}}{e^{(\alpha - \rho)}-1} \Big] + \frac{e^{-\alpha i}}{e^{(\alpha+\rho)}-1} + (1- e^{-\alpha i})\frac{\gamma\sigma^2 e^{-\alpha}}{(1-e^{-\alpha})^2} \\
& = n e^{-\rho i} \Big[ \frac{1}{1-e^{(\alpha-\rho)}} + \frac{1}{e^{(\alpha - \rho)}-1}\Big]  + e^{-\alpha i} \Big[\frac{1}{e^{(\alpha+\rho)}-1} -\frac{n}{e^{(\alpha - \rho)}-1} - \frac{\gamma\sigma^2 e^{-\alpha}}{(1-e^{-\alpha})^2} \Big]  + \frac{\gamma\sigma^2 e^{-\alpha}}{(1-e^{-\alpha})^2}\\
&=\frac{\gamma\sigma^2 e^{-\alpha}}{(1-e^{-\alpha})^2},
\end{align*}
where we have used $\nu_0 = \frac{1}{1-e^{(\alpha-\rho)}}$ and our equation \eqref{eq:infv2} in the final step. This establishes \eqref{pf:infv:eq1}. Now we can define 
$$\bm v=\frac1{\bm 1^\top\bm\nu}\bm\nu=\frac1{\frac1{e^\alpha-1}+\frac1{1-e^{\alpha-\rho}}}\bm \nu,
$$
which satisfies the equivalent of \eqref{eq:v} in our setting with infinite time horizon.

Let us now deal with the vector $\bm w$. To this end, we take $\beta$ as provided by Lemma~\ref{beta lemma} and define $\bm\omega\in\ell^1$ by $\omega_i=e^{-\beta i}$. Then
\begin{align*}
[(\Gamma^{\gamma,\theta}-\widetilde{\Gamma})\bm{\omega}]_i &= (2\theta + \frac{1}{2})e^{-\beta i} + e^{\rho i}\sum^{\infty}_{j=i+1}e^{-(\beta + \rho)j} + \gamma \sigma^2 \sum^i_{j=0}je^{-\beta j} + \gamma \sigma^2 i \sum^{\infty}_{j=i+1}e^{-\beta j}\\
& = e^{-\beta i} \Big[2\theta + \frac{1}{2} + \frac{1}{e^{(\beta+\rho)}-1}-\frac{\gamma \sigma^2 e^{-\beta}}{(1-e^{-\beta})^2} \Big] + \frac{\gamma \sigma^2 e^{-\beta}}{(1-e^{-\beta})^2}\\
&=\frac{\gamma \sigma^2 e^{-\beta}}{(1-e^{-\beta})^2}.
\end{align*}
It follows that we can define 
$$\bm w=\frac1{\bm1^\top\bm \omega}\bm\omega=\frac{e^\beta-1}{e^\beta}\bm\omega,
$$
 which satisfies the equivalent of \eqref{eq:w} in our setting with infinite time horizon.

 Finally, if initial positions $X_1,\dots, X_n\in\mathbb R$ are given, and we define $\bm\xi_1,\dots,\bm\xi_n$ via \eqref{opt_strat_infinity}, then it is straightforward to verify that of the first-order conditions
 \eqref{pf:optimalityconditions} are verified with our current choices for $\bm v$, $\bm w$, $\Gamma^{\gamma,\theta}$, and $\widetilde\Gamma$. As in \eqref{optimality check eq}, these yield that $\bm\xi_1,\dots,\bm\xi_n$ form a Nash equilibrium for mean-variance optimization. As in the proof of Theorem~\ref{maintheorem}, one then concludes that this is also a Nash equilibrium for CARA utility maximization.
\end{proof}


\begin{proof}[Proof of Proposition~\ref{prop:infv}]
If $n=1$ and $\gamma > 0$, we want to find an $\alpha$ that satisfies 
\[
\frac{1}{e^{(\alpha+\rho)}-1} -\frac{1}{e^{(\alpha - \rho)}-1} - \frac{\gamma \sigma^2 e^{-\alpha}}{(1-e^{-\alpha})^2} =0.
\]
If follows from \eqref{pf:infv:eq2} that
\[
\frac{\gamma \sigma^2}{2-2\cosh(\alpha)}- \frac{\sinh(\rho)}{\cosh(\alpha)-\cosh(\rho)} = 0.
\]
By rearranging the equation we have
\[
\cosh(\alpha) = \frac{\gamma \sigma^2 \cosh(\rho)+2\sinh(\rho)}{\gamma \sigma^2+2\sinh(\rho)}.
\]
Since $\rho > 0$, we have $\sinh(\rho) > 0$ and $\cosh(\rho) > 1$, which implies,
\[
\frac{\gamma \sigma^2 \cosh(\rho)+2\sinh(\rho)}{\gamma \sigma^2 +2\sinh(\rho)} > 1.
\]
Therefore, we can solve for $\alpha$ and obtain
\[
\alpha = \cosh^{-1}\Big[\frac{\gamma \sigma^2 \cosh(\rho)+2\sinh(\rho)}{\gamma \sigma^2 +2\sinh(\rho)}\Big].
\]
Furthermore, since $\rho > 0 $ implies that 
$$ \frac{\gamma \sigma^2 \cosh(\rho)+2\sinh(\rho)}{\gamma \sigma^2 +2\sinh(\rho)} < \cosh(\rho)$$
 and $\cosh^{-1}(\cdot)$ is an increasing function, we have that $0 < \alpha < \rho$. 
\end{proof}



\bibliographystyle{abbrv}
\bibliography{MarketImpact}

\end{document}